\theoremstyle{plain}
\newtheorem{theorem}{Theorem}
\newtheorem{lemma}[theorem]{Lemma}
\newtheorem{fact}[theorem]{Fact}
\theoremstyle{definition}
\newtheorem{definition}[theorem]{Definition}
\newcommand{\eps}{\epsilon}
\newcommand{\Z}{\mathbb Z}
\newcommand{\N}{\mathbb N}
\newcommand{\R}{\mathbb R}
\newcommand{\II}{\mathcal I}
\renewcommand{\AA}{\mathcal A}
\newcommand{\DD}{\mathcal D}
\newcommand{\zo}{\set{0,1}}
\newcommand{\Asparse}{\mathcal A_\text{sparse}}
\newcommand{\Adensity}{\mathcal A_\text{density}}
\newcommand{\Aapprox}{\mathcal A_\text{approx}}
\newcommand{\Mexact}{M_\text{exact}}
\newcommand{\Mapprox}{M_\text{approx}}
\newcommand{\kapprox}{k_\text{approx}}
\newcommand{\sparse}{\textsc{sparse}}
\newcommand{\dense}{\textsc{dense}}
\newcommand{\tO}{\tilde{O}}
\newcommand{\AdversaryPlayer}{\xspace{\textsf{Adversary}}\xspace}
\newcommand{\AlgorithmPlayer}{\xspace{\textsf{Algorithm}}\xspace}
\renewcommand{\th}{^{\textrm{th}}}
\newcommand{\Ism}{\II_{\text{small}}}
\newcommand{\Ilg}{\II_{\text{large}}}
\newcommand{\vcount}{\mbox{\sl count}}
\newcommand{\vregime}{\mbox{\sl regime}}
\newcommand{\vinterval}{\mbox{\sl interval}}
\DeclareMathOperator{\poly}{poly}
\DeclareMathOperator{\polylog}{polylog}
\DeclareMathOperator{\loglog}{loglog}
\DeclareMathOperator*{\E}{{\mathbb E}}
\newcommand{\moment}[2]{\left\|#2\right\|_{#1}^{#1}}
\newcommand{\pmoment}[1]{\moment{p}{#1}}
\newcommand{\eqdef}{\stackrel{\text{\rm def}}{=}}
\newcommand{\set}[1]{\left\{{#1}\right\}}
\newcommand{\algheader}[1]{\mbox{\textbf{#1:}}}
\newcommand{\noheader}[1]{\mbox{\phantom{\textbf{#1:}}}}
\newcommand{\aprx}[1][]{$(1\pm\alpha#1)$-approximation\xspace}
\newcommand{\aprxs}[1][]{$(1\pm\alpha#1)$-approximations\xspace}
\newcommand{\vt}[1]{v^{(#1)}}
\def\notes{0}
\title{Adversarially Robust Streaming via Dense--Sparse Trade-offs}
\author{Omri Ben-Eliezer\footnote{Work partially conducted while the author was at Harvard University.}\\MIT\\{\tt omrib@mit.edu} \and Talya Eden\\MIT\\{\tt talyaa01@gmail.com} \and Krzysztof Onak\\Boston University\\{\tt krzysztof@onak.pl}}
\date{}
\begin{document}
\maketitle 

\begin{abstract}
	A streaming algorithm is adversarially robust if it is guaranteed to perform correctly even in the presence of an adaptive adversary. The development and analysis of such algorithms have been a very active topic recently, and several sophisticated frameworks for robustification of classical streaming algorithms have been developed. One of the main open questions in this area is whether efficient adversarially robust algorithms exist for moment estimation problems (e.g., $F_2$-estimation) under the turnstile streaming model, where both insertions and deletions are allowed. So far, the best known space complexity for streams of length $m$, achieved using differential privacy (DP) based techniques, is of order $\tilde{O}(m^{1/2})$ for computing a constant-factor approximation with high constant probability (the $\tilde{O}$ notation hides here terms polynomial in $\log m$ and $\log n$, where $n$ is the universe size).
	In this work, we propose a new simple approach to tracking moments by alternating between two different regimes: a sparse regime, in which we can explicitly maintain the current frequency vector and use standard sparse recovery techniques, and a dense regime, in which we make use of existing DP-based robustification frameworks. The results obtained using our technique break the previous $m^{1/2}$ barrier for any fixed $p$. More specifically, our space complexity for $F_2$-estimation is $\tilde{O}(m^{2/5})$ and for $F_0$-estimation, i.e., counting the number of distinct elements, it is $\tilde O(m^{1/3})$.

All existing robustness frameworks have their space complexity depend multiplicatively on a parameter $\lambda$ called the \emph{flip number} of the streaming problem, where $\lambda = m$ in turnstile moment estimation. The best known dependence in these frameworks (for constant factor approximation) is of order $\tilde{O}(\lambda^{1/2})$, and it is known to be tight for certain problems. Again, our approach breaks this barrier, achieving a dependence of order $\tilde{O}(\lambda^{1/2 - c(p)})$ for $F_p$-estimation, where $c(p) > 0$ depends only on $p$.
\end{abstract}

\newpage
\section{Introduction}
Streaming algorithms are an integral part of the modern toolbox for large-scale data analysis. A~streaming algorithm observes a stream of data updates that arrive one by one, and is required to compute some global function of the data using a small amount of space (memory) and with an efficient running time. 

Most of the literature on streaming algorithms implicitly assumes that the stream updates do not depend on previous outputs of the algorithm or on the randomness produced by the algorithm. 
This assumption may not be realistic in many situations: for example, when the data is chosen by a malicious adversary in response to previous outputs, or when data characteristics change based on previous outcomes in some complicated or unpredictable way. 
As a result, the last couple of years have seen substantial progress in the systematic investigation of \emph{adversarially robust} streaming algorithms \cite{BY20, BJWY20, HassidimKMMS20, WoodruffZhou20, ABDMNY21, KMNS21, braverman2021adversarial, ACSS21}, which preserve their correctness guarantees even for adaptively chosen data and are thus especially suitable for these interactive settings.

There is already a wide range of problems and settings for which the best known adversarially robust streaming algorithms are almost as efficient as their classical, non-robust counterparts. 
The \emph{flip number} \cite{BJWY20} of a streaming problem, an algorithmic stability parameter which counts how many times the output value may change by a multiplicative factor of $1+\alpha$ as the stream progresses, plays a central role in many of these results \cite{HassidimKMMS20, WoodruffZhou20, KMNS21, ACSS21}. When the flip number $\lambda$ is small, the generic methods developed in these works can turn a classical streaming algorithm into an adversarially robust one with only a small overhead (linear in $\lambda$ or better) 
in the space complexity. This is especially useful in the \emph{insertion only} streaming model, where elements are only added to the stream, but may not be deleted from it. Many important streaming problems, such as $F_p$-estimation, distinct elements, entropy estimation, and various others, all have flip number of $\lambda = O(\alpha^{-1}\log m)$ for insertion-only streams of length $m$. Under the standard assumption that $m = \poly(n)$, where $n$ is the size of the universe of all possible data elements, and building on additional known results from the streaming literature, one can then obtain adversarially robust insertion-only \aprx algorithms with space complexity $\poly(1/\alpha, \log n)$.

The situation in the \emph{turnstile streaming} model, which allows both insertions and deletions, is more complicated. The most popular technique for turnstile streams in the classical regime, linear sketching, is provably not adversarially robust \cite{HardtWoodruff13}.
Furthermore, the flip number can be very large, potentially even   $\Theta(m)$.  The best known robustification methods in this regime \cite{HassidimKMMS20,ACSS21}, based on differential privacy, have a multiplicative $O(\sqrt{\lambda})$ dependence in the flip number (for constant $\eps$). Therefore,  
they induce a space overhead of $\tilde{\Omega}(\sqrt{m})$ compared to the best non-robust algorithms. 

A separation result of Kaplan, Mansour, Nissim and Stemmer~\cite{KMNS21} shows that indeed the $\sqrt{\lambda}$-type dependence in the flip number is tight for some streaming problems; specifically, they show this for a variant of the Adaptive Data Analysis problem in the context of bounded-space computation. 
We note, however, that the lower bound of \cite{KMNS21} does not apply to many core problems in the streaming literature, for which no separation between the classical oblivious and adversarially robust settings is known. In particular, this is the case for $F_p$-estimation, in which the goal is to approximate $\sum_i |v_i|^p$, the $p$-th moment of a frequency vector $v \in \Z^n$.
This gives rise to the following question, widely regarded as one of the central 
open questions on adversarially robust streaming.\footnote{%
To the best of our knowledge, the first explicit appearance of this question in 
the literature is in Jayaram's Ph.D.\ thesis \cite[page 26]{jayaram2021thesis}. 
See also a talk by Stemmer~\cite{stemmer_talk} at 54:45 and the third question 
on the list of open questions from the STOC 2021 Workshop on Robust Streaming, 
Sketching, and Sampling \cite{workshop}.}
\begin{center}
\makebox[\textwidth][s]
{\emph{What is the adversarially robust space complexity of $F_p$-estimation in the turnstile streaming model?}}
\end{center}

In this work we show that a combination of existing building blocks from the literature (with slight modifications and simplifications) can yield a substantially improved space complexity for the above problem. Our results hold when deletions are allowed, as long as each update increases or decreases the frequency of a single element by $1$ (or more generally, by a bounded integer amount). We also allow frequencies of elements to become negative in the process, which is known as the \emph{general turnstile streaming model}.

\section{Overview of Our Contribution}

\subsection{Our results}

We give an $F_p$-estimation algorithm that breaks the $\sqrt{m}$ (or $\sqrt{\lambda}$) space barrier. We now state a \emph{simplified} version of the main result, focusing just on on the dependence on the stream length $m$ and domain size $n$, whenever it is polynomial. 
For the full statement of our results, see Theorem~\ref{thm:main}.

\begin{theorem}[Simplified main result]\label{thm:main_sketch}
For any fixed $p \in [0,\infty)$ and $\alpha > 0$, there is an adversarially robust $F_p$-estimation streaming algorithm
that computes a \aprx, using:
\begin{itemize}
 \item $\tO(m^{1/3})$ space if $p \in [0,1]$,
 \item $\tO(m^{p/(2p+1)})$ if $p \in [1,2]$,
 \item $\tO(m^{p/(2p+1)} \cdot n^{1-5/(2p+1)})$ if $p \in (2,\infty)$,
\end{itemize}
where the $\tO$ notation suppresses factors that are polynomial in $\alpha^{-1}$, $\log m$, and $\log n$.
The algorithm gives correct estimates throughout the entire stream with probability $1-o(1)$.
\end{theorem}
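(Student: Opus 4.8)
The plan is to realize the "dense--sparse trade-off" suggested by the title: partition the run of the stream into phases according to whether the current frequency vector is \emph{sparse} (few nonzero coordinates, or small $\ell_0/\ell_1$ mass) or \emph{dense}, and handle each regime with a different mechanism, then stitch the two together with a budget argument parameterized by a threshold $T$ which I will optimize at the end. Concretely, I would maintain in parallel: (i) a \emph{sparse-recovery} module $\Asparse$ that, as long as the current frequency vector $v$ has at most $T$ nonzeros, explicitly recovers $v$ (via standard linear sparse-recovery / CountSketch-type sketches of size $\tO(T)$, which are deterministic/robust once $v$ is guaranteed sparse) and therefore outputs $F_p(v)$ exactly; and (ii) a \emph{dense} module that runs an existing DP-based robustification framework (the $\tO(\sqrt{\lambda})$ one cited as \cite{HassidimKMMS20, ACSS21}), but where we only "charge" it the flip number \emph{restricted to the dense phases}. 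The key observation driving the improvement is that while the frequency vector is dense, $F_p(v)$ is bounded below by roughly $T$ (for $p\le 1$, $F_p \ge F_0 \ge$ number of nonzeros once we are past the sparse threshold; for larger $p$ one uses $F_p \gtrsim T^{1-p/2} \cdot (\text{something})$ or the appropriate norm inequality), so the \emph{multiplicative} flip number during dense phases is only $O(\alpha^{-1}\log(m/T))$ per "level," and more importantly the number of times we can transition between sparse and dense is bounded because each such transition requires $\Omega(T)$ updates — giving at most $O(m/T)$ phases.

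**The steps, in order.** First, fix the threshold $T$ and describe the two modules precisely, including the space each uses: $\tO(T)$ for the sparse module, and $\tO(\sqrt{\lambda_{\mathrm{dense}}} \cdot \mathrm{poly}(\log n, 1/\alpha))$ for the dense one, where $\lambda_{\mathrm{dense}}$ is the contribution to the flip number coming from updates made while dense. Second, prove a \emph{phase-counting lemma}: the number of maximal runs during which $v$ stays on one side of the sparsity threshold is $O(m/T)$, because crossing from "$\le T$ nonzeros'' to "$> T$ nonzeros'' (or a suitable $\ell_1$-based threshold for $p\ge 1$) costs at least $T$ unit updates. Third, bound $\lambda_{\mathrm{dense}}$: within any dense phase the value $F_p(v)$ never drops below $\mathrm{poly}(T)$ (the exact exponent is where the three cases of Theorem~\ref{thm:main_sketch} come from, and for $p>2$ the universe-size factor $n^{1-5/(2p+1)}$ enters through the $\ell_2$--$\ell_p$ norm conversion $\|v\|_p \le \|v\|_2 \le n^{1/2-1/p}\|v\|_p$), and the total $F_p$ value over the stream is at most $\mathrm{poly}(m,n)$, so the number of multiplicative $(1+\alpha)$-changes while dense is $\tO(1)$ per phase, hence $\lambda_{\mathrm{dense}} = \tO(m/T)$ overall — giving dense-module space $\tO(\sqrt{m/T})$. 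Fourth, a \emph{switching/robustness wrapper}: since an adaptive adversary sees our outputs, I need to argue that alternating between the two modules and "resetting'' the appropriate sketches at phase boundaries does not leak information that breaks either module; this is handled by the standard sketch-switching / differential-privacy composition argument from \cite{BJWY20, HassidimKMMS20}, applied with the $\tO(m/T)$ bound on the number of switches. Fifth, optimize: total space is $\tO(T + \sqrt{m/T} \cdot (\text{regime-dependent factor}))$, and balancing $T \approx \sqrt{m/T}$ type equations (with the extra $F_p$-dependent exponent folded in) yields $T = m^{1/3}$ in the $p\le 1$ case, $T = m^{p/(2p+1)}$ for $1\le p\le 2$, and the stated expression with the $n$-factor for $p>2$.

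**The main obstacle** I anticipate is the robustness of the \emph{combined} algorithm rather than of each module in isolation: the sparse-recovery module is robust only under the promise that $v$ is genuinely $T$-sparse, but an adaptive adversary could try to push $v$ to have exactly $T{+}1$ or so nonzeros to sit right at the boundary and exploit the transition, or could use the timing of our switches (which are a function of $v$, hence revealed through outputs) to degrade the DP guarantee of the dense module. Getting the phase count $O(m/T)$ to be a \emph{worst-case} bound (it is, since it is purely a counting fact about unit updates) is easy, but ensuring the DP "privacy budget'' is spent correctly across an adversarially chosen interleaving of phases — and that re-initializing the sparse sketches with fresh randomness at each phase boundary is sound against an adversary who has seen all prior outputs — requires care; I would address it by wrapping the whole thing in the robustness-via-DP framework of \cite{HassidimKMMS20} / \cite{ACSS21} treating each phase as one "round,'' so the total privacy loss scales with $\sqrt{\#\text{phases}} = \tO(\sqrt{m/T})$, which is exactly the dependence already accounted for above. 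A secondary, more routine obstacle is pinning down the correct sparsity measure for each range of $p$ (plain $\ell_0$-sparsity for $p \le 1$ versus an $\ell_1$- or $\ell_2$-mass threshold for larger $p$) so that the lower bound on $F_p$ during dense phases comes out with the claimed exponents, and verifying the $p>2$ case where the $n^{1-5/(2p+1)}$ factor is forced by the dimension-dependent lower bound on $F_p$-estimation space even in the classical setting.
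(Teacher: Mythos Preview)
Your overall architecture---split into sparse and dense regimes governed by a threshold $T$, handle the sparse regime by explicit sparse recovery in $\tO(T)$ space, handle the dense regime via the DP-based robustification, then optimize $T$---is exactly the paper's approach. The gap is in your Step~3.

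You argue that within a dense phase the flip number is $\tO(1)$ because $F_p(v)$ is trapped in the range $[T,\mathrm{poly}(m,n)]$. That inference is only valid for \emph{monotone} streams. In the turnstile model the moment can oscillate: with $T$ coordinates fixed at value $1$ (so the vector stays $T$-dense throughout), one can pump a single extra coordinate up to roughly $T^{1/p}$ and back down, doubling and then halving $F_p$ using only $\Theta(T^{1/p})$ updates per oscillation. A dense phase of length $L$ can therefore witness $\Theta(L/T^{1/p})$ flips for $p\ge 1$, so $\lambda_{\mathrm{dense}}=\tO(m/T^{1/p})$, not $\tO(m/T)$. If your claimed bound were right, the optimization would yield $m^{1/3}$ uniformly for all $p\le 2$, strictly beating the theorem's $m^{p/(2p+1)}$---which is why you end up hand-waving a ``regime-dependent factor'' in Step~5. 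The paper's actual lemma (its ``bounded change'' section) is the correct replacement: if $v$ is $T$-dense and $\|v-v'\|_1 \lesssim \alpha T$ (for $p\le 1$) or $\|v-v'\|_1 \lesssim (\alpha/p)(\alpha T)^{1/p}$ (for $p\ge 1$), then $F_p(v')$ is a $(1\pm\alpha)$-approximation of $F_p(v)$. Hence the dense estimator need only be \emph{queried} once every $\approx T^{\min(1,1/p)}$ updates, giving $q\approx m/T^{\min(1,1/p)}$ and dense-module space $\tO(\sqrt{m/T^{1/p}})$ for $p\ge 1$; balancing against $T$ now yields $T=m^{p/(2p+1)}$ as stated.

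Two smaller corrections. First, you do not need different sparsity notions for different $p$: plain $\ell_0$-density suffices throughout, since $v$ being $T$-dense over $\Z^n$ already forces $F_p(v)\ge T$ for every $p\ge 0$. Second, the $n^{1-5/(2p+1)}$ factor for $p>2$ does not come from an $\ell_2$--$\ell_p$ norm conversion; it enters because the underlying \emph{oblivious} $F_p$-estimator already costs $\tilde\Theta(n^{1-2/p})$ space, and this multiplies the $\sqrt{q}$ overhead before you optimize $T$.
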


We note that since the flip number for the moment estimation problem is $\lambda=\Theta(m)$ (see Section \ref{subsec:adv_robust_streaming}), the dependency of the space complexity of our approach in $\lambda$ is $\widetilde{O}(\lambda^{1/3})$ for $p\in[0,1]$ and $\widetilde{O}(\lambda^{\frac{1}{2}-\frac{1}{4p+2}})$ for $p>1$.
This improves polynomially upon the currently best known $\widetilde{O}(\sqrt \lambda)$ bound, obtained using the aforementioned differential privacy based robustness frameworks \cite{HassidimKMMS20,ACSS21}.
Together with the separation of \cite{KMNS21}, our result (see also \cite{jayaram2021thesis}) suggests that a paradigm shift may be required in order to achieve improved space complexity for turnstile streams: rather than developing very widely applicable robustness frameworks that suffer from the $\sqrt{\lambda}$-type lower bound due to their wide applicability, it may make sense to look for methods that are perhaps somewhat less generic, and exploit other properties of the problem, beyond just the flip number.

\newcommand{\mainAlg}{
\begin{algorithm}[t!]
\caption{Adversarially Robust Streaming of Moments\\
\algheader{Parameters} $p \in [0,\infty)$ describing the moment,
dimension $n\in\Z_+$,
stream length 
\noheader{Parameters} bound $m \in \Z_+$,
threshold $T \in \R_+$, approximation quality parameter\\
\noheader{Parameters} 
$\alpha \in (0,1)$, error parameter $\delta$}
\label{alg:main}

$\vregime \leftarrow \sparse$; $v \leftarrow (0,\ldots,0)$; $\vcount \leftarrow 0$\\
$\Mexact \leftarrow 0$; $\Mapprox \leftarrow 0$; $\kapprox \leftarrow 0$\\
$\Asparse \leftarrow {}$the sparse recovery algorithm (Theorem~\ref{thm:sparse}) with sparsity parameter $k=\lceil 4T \rceil$\\
$\vinterval \leftarrow \begin{cases}
                            \alpha T/4 & \mbox{for }p\in[0,1]\\
                            \frac{\alpha}{32p}\left(\frac{\alpha T}{16}\right)^{1/p} & \mbox{for }p \in (1,\infty)
                            \end{cases}$\\
$\vinterval \leftarrow \max\{\lfloor\vinterval\rfloor,1\}$\\
$\Adensity \leftarrow {}${}$(m/\lfloor T/10\rfloor)$-query adversarially robust streaming algorithm (Algorithm \ref{alg:bounded_query})\newline for $(1\pm.25)$-approximation of number of distinct elements with error parameter $\delta/2$\\
$\Aapprox \leftarrow {}${}$(m/\vinterval)$-query adversarially robust streaming algorithm (Algorithm \ref{alg:bounded_query})\newline for \aprx[/4] of the $p$-th moment with error parameter $\delta/2$\\
\ForEach{\rm update $(i,\Delta)$}{
$\vcount \leftarrow \vcount + 1$\\
Process the update $(i,\Delta)$ by $\Asparse$, $\Adensity$, $\Aapprox$\\
\lIf{\rm$\vcount$ is a multiple of $\lfloor T/10\rfloor$}{$\kapprox \leftarrow {}$estimate from $\Adensity$}
\lIf{\rm$\vcount$ is a multiple of $\vinterval$}{$\Mapprox \leftarrow {}$estimate from $\Aapprox$}
\If{\rm$\vregime=\sparse$}{
Update $v$ and $\Mexact$\\
Output $\Mexact$\\
\lIf{\rm$\moment{0}{v} \ge 4T$}{$\vregime \leftarrow \dense$}
}
\Else{
Output $\Mapprox$\\
\If{\rm$\kapprox \le 2T$}{
Use $\Asparse$ to recover $v$\\
$\Mexact \leftarrow \pmoment{v}$\\
$\vregime \leftarrow \sparse$
}}
}
\end{algorithm}
}
\subsection{Our techniques}

Our result relies on a straightforward combination of known techniques from 
the literature. The bottleneck of the previous best result for moment estimation for 
general turnstile streaming is the direct reliance on the flip number 
$\lambda$, which for general streams can be of order $\Omega(m)$. As 
mentioned above, methods that take only the flip number into account (and do not use any other characteristics of the problem at hand) cannot get space complexity much smaller than $\sqrt{\lambda}$ (that is, $\sqrt{m}$ for norm estimation). Thus, we exploit a specific characteristic of
$F_p$-estimation: the actual number of significant 
changes to the $p$-th moment can only be large if the moment remains small.
This can only be the case if the underlying vector is sparse, i.e., has 
relatively few non-zero coordinates. We therefore divide the current 
state of the frequency vector into two regimes: sparse and dense, using a 
threshold $T$. If the vector has at most $T$ non-zero coordinates, it is 
considered sparse. If it has more than $4T$ non-zero coordinates, it is 
considered dense. For densities in between, the state of the vector may be 
temporarily classified as either dense or sparse. 

In the sparse regime,
we take the simplest possible approach, which is storing the input explicitly, 
using a sparse representation, which requires only ${O}(T)$ space. In this form, 
it is easy to maintain the current moment, since we know the current frequency vector
exactly. In the dense regime, we apply the technique from the paper of Hassidim, 
Kaplan, Mansour, Matias, and Stemmer~\cite{HassidimKMMS20},
which uses differential
privacy to protect not the input data, but rather the internal randomness of estimators
it uses. 
At a high level, their framework consists of invoking a set of $k$ estimators, which upon each query provide an updated estimate. 
Given the stream of updates, they  use differentially private methods to detect whenever the current estimate is no longer relevant, at which point they query the set of estimators to get an updated estimate.
Their technique in general increases the space requirement by a factor of
the square root of the flip number, compared to that of oblivious streaming algorithms.
In particular, applying their method for the moment estimation problems, 
requires invoking  $k=\widetilde{O}(\sqrt \lambda)$  instances of  oblivious $F_p$-estimation algorithms.
We improve on the above,  by taking advantage of the fact that the estimated value of the $p$-th moment cannot change too rapidly for dense vectors. For instance, for $p=0$ (the distinct element
count) or $p=1$, if the vector has at least $T$ non-zero coordinates, at least $\Omega(T)$ insertions or deletions are required to change it by a constant factor. Similarly, for $p=2$, at least $\Omega(\sqrt T)$ insertions or deletions are needed. Hence, the flip number for the dense regime is much lower, and  we can make significantly fewer queries to a set of oblivious $F_p$-estimation algorithms. This in turn implies that we can significantly reduce the number of required estimators.

The missing component in our description so far is how the transition between the 
regimes happens. If we are transitioning from the sparse regime to the dense one, we 
have all the information needed about the current state of the input vector that 
we are tracking. If we are transitioning from the dense regime to the sparse 
regime, we use off-the-shelf sparse recovery techniques (also known as 
compressed sensing) to recover the frequency vector exactly. To know when to do 
this, we run in parallel an adversarially robust streaming algorithm for 
distinct element counting, which we also know has to be queried only every 
$\Omega(T)$ steps.

\mainAlg

\subsection{Pseudocode}
We present the pseudocode of our approach as 
Algorithm~\ref{alg:main}. We maintain 
adversarially robust estimators, $\Aapprox$ and $\Adensity$, that  are queried  significantly less frequently than $m$ times throughout the entire execution of the algorithm. We query them at 
regular intervals, knowing that their values cannot change too rapidly, when the 
vector is dense. We note that we do not use them when the vector is sparse, as in that regime their 
readouts may be inaccurate.

We present the pseudocode for an adversarially robust algorithm that has to 
answer only a limited number of queries in Algorithm~\ref{alg:bounded_query}. This 
algorithm is a simplified and adjusted version of the algorithm that appeared in 
the work of Hassidim et al.~\cite{HassidimKMMS20}. In the introduction of their paper, they note that constructing 
a bounded-query variant of their algorithm is possible, but
do not give any details beyond that. As this variant is crucial for our 
purposes, we present this construction in full detail for completeness.

\begin{algorithm}[t]
\caption{Bounded Query Adversarially Robust Streaming\\
\algheader{Parameters} number $q$ of queries, probability $\delta$ of error, bound $\tau$ on the size of the range
\noheader{Parameters} of possible values, desired approximation parameter $\alpha$\\
\algheader{Subroutine} oblivious \aprx[/3] streaming algorithm $\AA$ as described in\\
\noheader{Subroutine} the statement of Lemma~\ref{lem:query_bounded}
}
\label{alg:bounded_query}
$\eps \leftarrow 1/100$\label{line:eps_def}\\
$\delta' \leftarrow \eps\delta/(10q)$\label{line:delta_prime_def} \\
$\eps' \leftarrow \eps/\sqrt{8q\ln(1/\delta')}$\label{line:eps_prime_def}\\
$k \leftarrow \Theta(\eps'^{-1}\log\frac{2q\log 2\tau}{\alpha\delta})$\label{line:k_def}\\
Initialize $k$ independent instances $\AA_1$, \ldots, $\AA_k$ of $\AA$\\
\ForEach{\rm update $(i,\Delta)$}{
Process the update $(i,\Delta)$ by all of $\AA_1$, \ldots, $\AA_k$\\
\If{\rm the adversary is querying for the current output}{
\ForEach{$j \in [k]$}{
$\gamma_j \leftarrow \mbox{estimate from\ }\AA_j$\\
$\gamma'_j \leftarrow \gamma_j$\mbox{\ rounded up to a multiple of\ }$1+\frac{\alpha}{3}$ and truncated if not in $\set{0} \cup [1,\tau)$.
}
Output an $(\eps',0)$-DP estimate of the median of 
$\set{\gamma'_1,\ldots,\gamma'_k}$ as in Theorem~\ref{thm:dp_median}\newline
with parameters set to 
$\delta \leftarrow \frac{\delta}{2q}$ and $\eps\leftarrow\eps'$
}
}
\end{algorithm}
\section{Preliminaries}

\subsection{Basic notation and terms}

For any $k \in \N$, we write $[k]$ to denote $\set{1,\ldots,k}$, the set of $k$ smallest positive integers.

\begin{definition}[The $p\th$-moment]
The $p\th$-moment of a vector $v\in \R^n$ is $\pmoment{v}=\sum_{i=1}^n |v_i|^p$, 
for any $p \in [0,\infty)$. We interpret the $0\th$-moment as
$\|v\|^0_0 = |\set{i \in [k]: v_i \ne 0}|$, i.e., the number of 
non-zero coordinates of $v$, 
by assuming in this context that $0^0 = 0$ and $x^0 = 1$ for any $x \ne 0$.
\end{definition}

\begin{definition}[Vector density]\label{def:dense}
Let $k \in \R$ and $n \in \N$. We say that a vector $v \in \R^n$ is 
\emph{$k$-dense} if at least $k$ of its coordinates are non-zero, and \emph{$k$-sparse} if at most $k$ of them are non-zero.
\end{definition}

\begin{definition}[\aprx]
For any $Q,Q' \in [0,\infty)$ and $\alpha \in [0,1]$, we say that $Q'$ is a \emph{\aprx} to $Q$ if
\[(1-\alpha)Q \le Q' \le (1+\alpha)Q.\]
\end{definition}

\subsection{Streaming algorithms}

A \emph{streaming algorithm} receives, one by one, a stream of updates $u_1$, $u_2$, \ldots, 
$u_m$ that modify the data, and is typically required to compute or approximate 
some function $f$ of the data over
the stream of updates. In this paper, we fully focus on the setting in which the 
input is a \emph{frequency vector} $v \in \Z^n$ for some integer $n$, known to 
the algorithm in advance. Initially, at the beginning of the stream, this vector 
is the all-zero vector, i.e., $v=(0,\ldots,0)$. The stream consists of updates 
of the form $u_j = (i_j,\Delta_j)$ in which $i_j \in [n]$ and 
$\Delta_j \in \set{-1,1}$. The interpretation of each update is that $\Delta_j$ is 
added to the $i_j$-th coordinate of $v$, i.e., each update increases 
(``insertion'') or decreases (``deletion'') a select coordinate by 
1.\footnote{As mentioned, the update values $\Delta_j$ are always $\pm 1$ in the 
model we consider here. In the most general setting for turnstile streaming, 
$\Delta_j$ may be unbounded; however, for our arguments to hold, it is important 
that $\Delta_j$ are bounded in some way (e.g., satisfy $\Delta_j \in [-C, 
-C^{-1}] \cup \{0\} \cup [C^{-1}, C]$ for some constant $C \geq 1$).
This assumption is not necessary for $F_0$-estimation---i.e., counting the 
number of distinct elements---for which updates of arbitrary magnitude are 
allowed as long as they can be handled by a non-robust streaming algorithm on 
which we build.}

For any fixed stream of updates, the streaming algorithm is required to output 
$f(v)$ or a good approximation to $f(v)$ (e.g., a \aprx if $f(v)$ is a 
non-negative real number) after seeing the stream with probability $1-\delta$, 
for some parameter $\delta$. We refer to $1-\delta$ in this context, as \emph{success probability}.
We sometimes refer to streaming algorithms in this model, in which the stream is independent of the actions of the streaming algorithm, as \emph{oblivious} or \emph{non-robust} to distinguish them from adversarially robust streaming algorithms, which we design in this paper.

\paragraph{Intermediate approximations.}
We assume that the streaming algorithm does not know the exact length of the 
stream in advance, and only knows an upper bound on it. Because of that, we 
assume that the algorithm can be asked to output its approximation of 
$f(v)$ at any time throughout the stream. This is true for a large majority of 
streaming algorithms, and in particular, to the best of our knowledge, applies 
to all general moment streaming algorithms. For this type of streaming algorithm,
if its success probability is $1-\delta$, this means that it can output a 
desired approximation with probability at least $1-\delta$ at any fixed prefix 
of the stream.

\paragraph{Streaming related notation and assumptions.}
Throughout the paper, we consistently write $n$ to denote the dimension of the vector on 
which the streaming algorithm operates. We use $m$ to denote an upper bound 
on the length of the input and we assume that $m = O(\poly(n))$. We assume that machine words are large enough to represent $n$ and $m$, i.e., the number of bits in them is at least $\Omega(\log \max\{m,n\})$ and we express the complexity of algorithms in words.

\subsection{Adversarially robust streaming algorithms}
\label{subsec:adv_robust_streaming}

In this paper, we design streaming algorithms in the adversarially robust streaming model of Ben-Eliezer et al.~\cite{BJWY20}, which we now introduce in the context of computing a \aprx to values of a function $f$.

\begin{definition}[Adversarially robust streaming]
\label{def:adv_robust_streaming}
Fix a function $f:\mathbb{Z}^n \to [0,\infty)$ and let $\alpha > 0$. The robust streaming model is defined as a game between two players, \AdversaryPlayer and \AlgorithmPlayer, where $f$, $\alpha$, and the stream length $m$ are known to both players. 
In each round $j \in [m]$:
\begin{itemize}
\item First, \AdversaryPlayer picks $(i_j, \Delta_j)$ for $i_j \in [n]$ and $\Delta_j \in \{-1,1\}$ and sends them to \AlgorithmPlayer. The choice of $i_j$ and $\Delta_j$ may depend on all previous updates sent by \AdversaryPlayer as well as all previous outputs of \AlgorithmPlayer.
\item \AlgorithmPlayer outputs $y_j$, which is required to be a \aprx to $f(\vt{j})$, where $\vt{j}$ is the vector aggregating all updates so far (that is, $\vt{j}_i = \sum_{j' \leq j: i_{j'} = i} \Delta_{j'}$ for all $i \in [n]$). \AlgorithmPlayer sends $y_j$ to \AdversaryPlayer.
\end{itemize}
\AlgorithmPlayer's goal is to return correct outputs at all times. That is, $y_j$ is required to be a \aprx to $f(\vt{j})$ for all $j \in [m]$.
Conversely, \AdversaryPlayer's goal is to have \AlgorithmPlayer's output $y_j$ that is not a \aprx to $f(\vt{j})$ for some $j \in [m]$.

We say that a streaming algorithm is \emph{adversarially robust} and has success probability $1-\delta$ for some $\delta\in[0,1]$ if it can provide all correct outputs as \AlgorithmPlayer with probability at least $1-\delta$ for any \AdversaryPlayer.
\end{definition}

We now introduce the notion of an \emph{$q$-query} adversarially robust streaming algorithm, which has to provide no more than $q$ outputs.
\begin{definition}[$q$-query robust streaming algorithm]
Let $q \in \mathbb{N}$. A \emph{$q$-query adversarially robust streaming 
algorithm} is defined similarly to a standard adversarially robust streaming 
algorithm, with the following modification: 
\AdversaryPlayer may perform at most 
$q$ queries for outputs $y_j$ from \AlgorithmPlayer, and only receives the output $y_j$ in these time steps where queries are made. \AdversaryPlayer may pick these time steps adaptively as a function of all previous interactions.
We say that a $q$-query adversarially robust streaming algorithm has a probability of success $1-\delta$, for some $\delta\in[0,1]$, if for any \AdversaryPlayer that makes at most $q$ queries, with probability at least $1-\delta$, it correctly answers all of them.
\end{definition}

\paragraph{Note on the tracking property.}
Oblivious streaming algorithms are not required to have the ``tracking'' property, i.e., they have to provide a good approximation at the end of the stream (or at any fixed point), but their definition does not require any type of consistent behavior throughout the stream. This is required, however, for adversarially robust streaming algorithms. We build our adversarially robust streaming algorithms from oblivious streaming algorithms that do not have a tracking property.

\subsection{Frequency moment estimation}

The main focus of this paper is designing streaming algorithms for the problem of $F_p$-estimation, i.e., estimation of the $p\th$-moment, in the turnstile  streaming model.
To build our adversarially robust algorithms, we use classical, non-robust, turnstile streaming algorithms for $F_p$-estimation.

\begin{theorem}[Previously known $F_p$-estimation results]\label{thm:approx_alg}
There exist turnstile streaming algorithms that with probability of success $\frac{9}{10}$, return a \aprx to the $p\th$ moment of a frequency vector in $\Z^n$ on the stream of length $m$ with the following space complexity:
\begin{center}
\rm\renewcommand{\arraystretch}{1.4}
    \begin{tabular}{|c|c|c|}
    \hline
     \textbf{Value of $p$} & \textbf{Space} & \textbf{Reference}\\ \hline
     $p=0$ &
     $O(\alpha^{-2}\cdot\log n\cdot (\log(1/\alpha)+\loglog(m)))$ &
     \cite{kane2010optimal} \\ \hline
    $p\in(0,2]$ & 
    $O(\alpha^{-2} \cdot \log m)$ & \cite{kane2010exact} \\ \hline
    $p> 2$ & 
    $O\left(n^{1-2/p} \cdot \left(\alpha^{-2}+\alpha^{-4/p}\log n \right)\right)$& \cite{GW18}  \\ \hline
    \end{tabular}
\end{center}
\end{theorem}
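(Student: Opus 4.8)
The plan is simply to instantiate each row of the table with the corresponding known construction, checking that it provides exactly the (weak, oblivious) guarantee we need: constant success probability $9/10$, the space bound as written, and the ability to report a valid estimate at any fixed prefix of the stream rather than only at the end. All three algorithms are classical sketch-based constructions, so beyond quoting them the only work is cosmetic normalization --- bit versus word complexity, and translating $\log m$ and $\log n$ factors using the standing assumption $m=\poly(n)$.

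For $p=0$ I would invoke the optimal turnstile distinct-elements algorithm of Kane, Nelson and Woodruff~\cite{kane2010optimal}, whose space usage is precisely the first row of the table; since our $F_0$ application only requires the turnstile guarantee and constant success probability, nothing further is needed. For $p\in(0,2]$ I would use the $F_p$-sketch of Kane, Nelson and Woodruff~\cite{kane2010exact}: being a linear sketch of the frequency vector it is valid in the general turnstile model, it uses $O(\alpha^{-2}\log(mn))$ bits for constant success probability, and under $m=\poly(n)$ with word-sized registers this is within the $O(\alpha^{-2}\log m)$ words claimed. For $p>2$ I would cite Ganguly and Woodruff~\cite{GW18}, whose $F_p$ sketch uses $O(n^{1-2/p}(\alpha^{-2}+\alpha^{-4/p}\log n))$ words in the turnstile model; one only needs to note that their high-probability guarantee can be weakened to the constant success probability $9/10$ we ask for (which only makes the space smaller), matching the table.

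The single genuine bookkeeping point --- the nearest thing to an obstacle --- is the intermediate-approximation property discussed in the Preliminaries: we need that, with probability $9/10$, the reported value is a \aprx to $\pmoment{\vt{j}}$ at \emph{every fixed prefix} $j$, not merely at the end. For all three algorithms this is immediate from linearity of the sketch: after processing $j$ updates the sketch state equals the sketch of $\vt{j}$, so the end-of-stream analysis applies verbatim to any fixed $j$. Note that no union bound over the $m$ prefixes is attempted --- robustness against an adaptively chosen query time is supplied later by Algorithm~\ref{alg:bounded_query}, not by this theorem --- so $9/10$ success per fixed prefix is all that is claimed and all that is used.
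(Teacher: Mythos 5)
Your proposal is correct and matches the paper's approach: Theorem~\ref{thm:approx_alg} is stated as a citation of known oblivious turnstile algorithms (Kane--Nelson--Woodruff for $p=0$ and $p\in(0,2]$, Ganguly--Woodruff for $p>2$), and the paper offers no further proof beyond the references. Your added remarks---normalizing word/bit complexity under $m=\poly(n)$, and observing that linearity of the sketches gives the intermediate-approximation property at any fixed prefix without a union bound---are exactly the bookkeeping the paper implicitly relies on in its Preliminaries, so they are appropriate and not a departure.
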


\subsection{Flip number}

The flip number, defined in \cite{BJWY20}, plays a prominent role in many of the previous results on adversarially robust streaming. For completeness, we next provide its definition suited for our context.
\begin{definition}[Flip number]
Fix a function $f \colon \Z^n \to [0,\infty)$ and $\alpha > 0$. Let $u_1 = (i_1, \Delta_1)$, \ldots, $u_m = (i_m, \Delta_m)$ be a sequence of updates to some vector $v$ whose initial value is $\vt{0}$, and let $\vt{j}$ be the value of the vector after $j$ updates have been received. The \emph{flip number} $\lambda_\alpha(f, (u_1, \ldots, u_m))$ of $f$ with respect to the above sequence is the size $t$ of the largest subsequence $0 \leq j_1 < \ldots < j_t \leq m$ for which $f\left(\vt{j_l}\right)$ is not a \aprx of  $f\left(\vt{j_{l+1}}\right)$ for any $l=1,\ldots,t-1$. The flip number $\lambda_\alpha(f)$ of $f$ is the maximum of $\lambda_\alpha(f, (u_1, \ldots, u_m))$ over all possible choices of the sequence $u_1$, \ldots, $u_m$.
\end{definition}

It is easy to see that the flip number of $F_p$-estimation is $\Omega(m)$ for any $p$: indeed, consider the following pair of insertion-deletion updates $(i,1), (i,-1)$, repeated $m/2$ times. In such a stream, the value of the $p\th$ moment alternates $m$ times between $0$ and $1$.

\subsection{Sparse recovery}

In our algorithm, we use sparse recovery to reconstruct the current frequency vector 
when it becomes sparse, which is possible even if it was arbitrarily dense in the meantime. For an 
introduction to the topic of sparse recovery, see the survey of Gilbert and 
Indyk~\cite{sparse_recovery}. Here we use the following streaming 
subroutine introduced by Gilbert, Strauss, Tropp, and 
Vershynin~\cite{GSTV}.

\begin{theorem}[Sparse recovery \cite{GSTV}]\label{thm:sparse}
There is a streaming algorithm that takes a parameter $k$, operates on a vector 
$v\in\Z^n$, and has the following properties. It uses $O(k \polylog n)$ words of 
space and handles each coordinate update in $O(\polylog n)$ time. Whenever the 
input vector is $k$-sparse, the algorithm can reconstruct it exactly in $O(k 
\polylog n)$ time.

With probability $1-O(n^{-3})$, taken over the initial selection of randomness, the algorithm can correctly recover \emph{all} $k$-sparse vectors in all parts of the process (even when they are constructed in an adaptive manner).
\end{theorem}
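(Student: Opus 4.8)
The plan is to implement the algorithm as a \emph{linear sketch} $\Phi v \in \Z^d$ for a random matrix $\Phi$ with $d = O(k\polylog n)$ rows and $O(\polylog n)$ nonzeros per column, and then to obtain the ``recover all $k$-sparse vectors'' guarantee from a single random draw of $\Phi$. This \emph{for-all} property is exactly what gives robustness against adaptively built inputs: if the fixed $\Phi$ already decodes \emph{every} $k$-sparse vector correctly, then no adversary --- however its stream depends on past outputs --- can produce a $k$-sparse vector that $\Phi$ fails on. Linearity makes updates trivial: processing $(i,\Delta)$ adds $\Delta$ times the $i$-th column of $\Phi$, touching only $O(\polylog n)$ sketch entries. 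Because $|v_i|$ never exceeds $m=\poly(n)$ (even while $v$ is dense) and $\Phi$ has entries of magnitude $\poly(n)$, each sketch entry fits in $O(\log n)$ bits, so the total space is $O(k\polylog n)$ words and each update takes $O(\polylog n)$ time.

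For the structure of $\Phi$ I would use the standard identification-plus-estimation paradigm. Partition the rows into several repetitions; in each, hash $[n]$ into $O(k)$ buckets, and for each bucket maintain the plain coordinate sum $\sum_{i \in \text{bucket}} v_i$ together with $O(\log n)$ ``bit-test'' sums $\sum_{i \in \text{bucket}} v_i\cdot[\ell\text{-th bit of }i]$ and a sign test (coordinates may be negative). If a bucket is \emph{isolated}, i.e., holds exactly one nonzero coordinate $i$ of $v$, then comparing each bit-test sum against the plain sum recovers the binary encoding of $i$ exactly, and the plain sum simply equals the integer $v_i$; there is no rounding, which is why the recovery is \emph{exact} rather than approximate. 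Decoding scans all $O(k\polylog n)$ buckets, reads a candidate $(i,v_i)$ off each isolated one, and for each recovered index takes the value agreed on by a majority of repetitions; under the good event below this reconstructs $v$ in $O(k\polylog n)$ time.

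It remains to bound the probability that the good event --- every nonzero coordinate of the current $v$ is isolated in enough repetitions --- holds \emph{simultaneously for all} $k$-sparse $v$. Fixing a support $S$ with $|S|\le k$, one nonzero coordinate collides with another element of $S$ in a single repetition with probability $\le 1/2$ once the number of buckets is a large enough constant times $k$; a Chernoff bound over the repetitions then makes the probability that a given coordinate is under-isolated exponentially small, a union bound over the $\le k$ coordinates of $S$ handles $S$, and a union bound over the $\binom{n}{k}\le n^k$ supports controls everything --- but this naive accounting needs $\Theta(k\log n)$ repetitions, hence $\Theta(k^2\log n)$ measurements. To stay at $O(k\polylog n)$ one instead uses a structured isolation scheme (an unbalanced expander, or a ``weak'' hashing system) that from one draw isolates a constant fraction of the nonzero coordinates, identifies and subtracts them, and is then re-applied $O(\log n)$ times to the residual $v-\hat v$, each round roughly halving the remaining support; a level-by-level analysis of this iteration yields the failure bound $O(n^{-3})$ over the draw of $\Phi$ alone --- and, being a statement about $\Phi$ only, it is uniform over all $k$-sparse $v$ and survives adaptivity.

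The main obstacle is precisely reconciling the competing demands: near-linear-in-$k$ measurement count, $O(\polylog n)$ per-update time, $O(k\polylog n)$ decoding time, \emph{and} the for-all guarantee. Naive random hashing with a union bound over supports violates the first; deterministic superimposed-code (strongly-selective-family) constructions also cost $\Theta(k^2\log n)$ measurements; and a careless iterative peeling scheme either blows up the decoding time or loses the for-all property when non-isolated buckets emit spurious coordinates. Resolving all of this at once --- in particular, guaranteeing that every candidate produced during decoding is either correct or discarded by a cross-check across repetitions, while the iteration provably clears the whole support within $O(\log n)$ rounds --- is exactly the content of the construction of \cite{GSTV}, so for our purposes it is cleanest to invoke it as a black box; a fully self-contained proof would essentially reproduce their HHS-type sketch and its analysis.
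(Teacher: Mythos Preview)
The paper does not prove this theorem at all: it is stated as a black-box citation of \cite{GSTV}, with no argument given. Your proposal is therefore not comparable to a ``paper's own proof''---you have written strictly more than the paper does, namely an informal outline of how the GSTV construction works (linear sketch, bucket hashing with bit tests, iterative peeling to avoid the $k^2$ blowup from a naive union bound over supports). The outline is accurate in spirit, and you yourself conclude that a rigorous argument would amount to reproducing \cite{GSTV} and is best invoked as a black box; that is exactly what the paper does.
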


\subsection{Differential privacy}
Differential privacy \cite{DMNS2006} is by now a standard formal notion of privacy for individual data items in large datasets. The formal definition is as follows.
\begin{definition}[Differential Privacy]
Let $A$ be a randomized algorithm operating on databases. $A$ is $(\eps, \delta)$-differentially private (in short $(\eps, \delta)$-DP) if for any two databases $S$ and $S'$ that differ on one row, and any event $T$, it holds that 
$$
\Pr[A(S) \in T] \leq e^\eps \cdot \Pr[A(S') \in T] + \delta
$$
\end{definition}
In the framework of Hassidim et al.~\cite{HassidimKMMS20}, which we use here, DP is used in a somewhat non-standard way to protect the internal randomness of instances of a static algorithm.

\section{Bounded Query Adversarially Robust Streaming}
\label{sec:bounded_query_streaming}

In this section, we present a $q$-query adversarially robust streaming algorithm for 
approximating a function $f \colon \Z^n \to \R$.
In the case that $q \ll m$, and for problems where the flip number is $\lambda = \Theta(m)$, such as turnstile $F_p$-estimation, it obtains significant gains in the space complexity compared to the general algorithm introduced by Hassidim et al.~\cite{HassidimKMMS20}.
The space overhead of the $q$-query robust 
algorithm over an oblivious streaming algorithm is roughly $\sqrt{q}$, 
independently of how much the function changes in the meantime. Informally, this 
is because the flip number of the output observed by the adversary decreases 
from a (worst case) $\Theta(m)$ factor to a $\Theta(q)$ one. The algorithm is a 
simplified and adjusted version of the algorithm of Hassidim et al.~\cite{HassidimKMMS20}. 
Their algorithm builds on two important primitives: a DP procedure for detecting when a set of functions exceeds a certain 
threshold, and a DP procedure for computing the median 
of a set of values. Their algorithm works by invoking the threshold detection 
procedure after each update, in order to detect whether the estimate of the 
computed function should be re-evaluated. If this is the case, then the median 
procedure is used to compute a private updated estimation. 
Compared to their algorithm, we do not need the first primitive, i.e., the differentially private threshold detection.
We only recompute a private median when 
the algorithm is replying to a query from the adversary.\footnote{We note that the private thresholds procedure is crucial for Hassidim et al.\ \cite{HassidimKMMS20} to improve their space overhead from roughly $\sqrt{m}$, which strongly depends on the stream length, to roughly $\sqrt{\lambda}$, which depends only on the flip number. In the problems we consider here, this is not essential as $\lambda = \Theta(m)$ anyway.}

\begin{lemma}[$q$-query adversarially robust algorithm]\label{lem:query_bounded}
Let $\alpha,\delta \in (0,1)$ and $q \in \Z_+$. Let $\AA$ be an oblivious 
streaming algorithm for computing a \aprx[/3] to a function $f:\Z^n\to 
[0,\infty)$ that uses $S$ space 
and is correct with probability $9/10$ when queried once. Additionally, let 
$\set{0} \cup [1,\tau]$ be the range of possible correct values of $f$ on the 
stream.

There is a $q$-query adversarially robust streaming algorithm, Algorithm~\ref{alg:bounded_query},  that uses 
$\chi \cdot S$ space
to provide a \aprx to $f$ with probability $1-\delta$, where \[\chi \eqdef 
O\left(\sqrt{q \log (2q/\delta)} \cdot 
\log\frac{2q\log2\tau}{\alpha\delta}\right).\]
\end{lemma}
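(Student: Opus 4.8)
The plan is to adapt the differential-privacy-based analysis of Hassidim et al.~\cite{HassidimKMMS20} to the bounded-query regime, where the privacy budget only has to be composed over $q$ rounds rather than over the whole stream; in particular, since a private median is recomputed only when \AdversaryPlayer actually queries, we never need their private threshold-detection primitive. Assume first, without loss of generality, that \AdversaryPlayer is deterministic (fix its coins). The argument has three parts: (i) the transcript of the at most $q$ answers that \AlgorithmPlayer sends to \AdversaryPlayer is differentially private with respect to the tuple $Z=(S_1,\dots,S_k)$ of independent random strings driving the oblivious instances $\AA_1,\dots,\AA_k$; (ii) this privacy forces all but a small constant fraction of the $\AA_j$ to still return a \aprx[/3] at each query point, despite the adaptive stream; and (iii) a DP median of mostly-accurate estimates is a \aprx, after which we union-bound over the queries. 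For part (i): view $Z$ as a database with $k$ rows; once the transcript before a given query is fixed (which, as \AdversaryPlayer is deterministic, fixes the stream prefix), each rounded estimate $\gamma'_j$ is a function of $S_j$ alone, so a single-row change alters at most one element of $\set{\gamma'_1,\dots,\gamma'_k}$, and hence the answer, an $(\eps',0)$-DP median (Theorem~\ref{thm:dp_median}), is $(\eps',0)$-DP in $Z$. Adaptive composition over the $\le q$ queries then makes the transcript $(\eps'',\delta')$-DP with $\eps'' = \eps'\sqrt{2q\ln(1/\delta')} + q\eps'(e^{\eps'}-1)$; the settings $\eps' = \eps/\sqrt{8q\ln(1/\delta')}$ and $\delta' = \eps\delta/(10q)$ make the first summand exactly $\eps/2$ and the second, $q\eps'(e^{\eps'}-1) = O(q(\eps')^2) = O(\eps^2/\ln(1/\delta'))$, at most $\eps/2$, so $\eps'' \le \eps = 1/100$. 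Since $\sigma_t$, the prefix up to any query point $t$, is a post-processing of the transcript, $Z \mapsto \sigma_t$ is $(\eps,\delta')$-DP, in particular in each single row $S_j$.

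For part (ii), fix a query point $t$ and an index $j$, and let $\mathrm{bad}_{j,t}$ be the event that $\AA_j$ run with seed $S_j$ on $\sigma_t$ is not a \aprx[/3] of $f(\vt{t})$. Draw a fresh independent copy $\bar S_j$ and let $\bar Z$ be $Z$ with row $j$ replaced by $\bar S_j$; then $\sigma_t(\bar Z)$ is independent of $S_j$, so the single-query correctness of $\AA_j$ on a fixed prefix gives $\Pr[\AA_j \text{ fails on } \sigma_t(\bar Z) \text{ with seed } S_j] \le 1/10$. Applying the $(\eps,\delta')$-DP of $Z\mapsto\sigma_t$ to the set of prefixes that is fixed once $S_j$ is fixed --- namely those on which seed $S_j$ makes $\AA_j$ fail --- yields $\Pr[\mathrm{bad}_{j,t}] \le e^{\eps}/10 + \delta' < 1/8$. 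Hence the expected number of failing estimators at time $t$ is below $k/8$, and the high-probability generalization guarantee that accompanies $(\eps,\delta')$-differential privacy (as used by Hassidim et al.) promotes this to: with probability $\ge 1-\delta/(4q)$, fewer than $k/4$ of the $\AA_j$ fail at time $t$. This is where the hidden constant in $k = \Theta((1/\eps')\log\tfrac{2q\log 2\tau}{\alpha\delta})$ must be taken large enough. A union bound over the $\le q$ query points bounds the failure probability of this part by $\delta/4$.

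For part (iii), condition on the (probability $\ge 1-\delta/4$) event that at every query point fewer than $k/4$ of the $\AA_j$ fail, and fix a query point $t$; write $f = f(\vt{t})$. Every non-failing $\AA_j$ has $\gamma_j \in [(1-\tfrac{\alpha}{3})f, (1+\tfrac{\alpha}{3})f]$, and rounding up to a power of $1+\tfrac{\alpha}{3}$ keeps $\gamma'_j$ in $[(1-\tfrac{\alpha}{3})f, (1+\tfrac{\alpha}{3})^2 f] \subseteq [(1-\alpha)f, (1+\alpha)f]$, using $(1+\tfrac{\alpha}{3})^2 \le 1+\alpha$ for $\alpha \le 1$ (the case $f=0$ being trivial). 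Since fewer than $k/4$ of the $\gamma'_j$ fall outside this multiplicative window, the guarantee of Theorem~\ref{thm:dp_median}, invoked with failure parameter $\delta/(2q)$ over the discretized range of size $N = O(\alpha^{-1}\log\tau)$, returns with probability $\ge 1-\delta/(2q)$ a value whose rank among the $\gamma'_j$ is within $O((1/\eps')\log\tfrac{2qN}{\delta})$ of $k/2$, which is below $k/4$ for our choice of $k$; any such value lies in the window and is thus a \aprx of $f$. A further union bound over the $\le q$ queries adds $\le \delta/2$, so \AlgorithmPlayer answers everything correctly with probability at least $1-\delta/4-\delta/2 > 1-\delta$. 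For the space: we store $k$ states of $\AA$, i.e.\ $kS$ words, plus $O(k)$ words for the median subroutine; substituting $\ln(1/\delta') = \Theta(\log(q/\delta))$ into $k$ shows this is $\chi\cdot S$.

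The main obstacle I anticipate is part (ii). The neighboring-database coupling must be set up with care: the seed $S_j$ used to decode $\AA_j$ must be genuinely independent of the resampled prefix $\sigma_t(\bar Z)$, and the DP inequality must be applied to an event (a set of prefixes) that does not itself move with the row being changed. Beyond that, the real work is upgrading the single-estimator bound into a bound on the \emph{fraction} of failing estimators that holds with probability $1-\delta/q$; this is exactly where adaptivity is controlled, and it needs the high-probability (monitor-style) form of DP generalization rather than merely its in-expectation form --- the parameter $\delta' = \eps\delta/(10q)$ is chosen precisely so that $\delta'/(\delta/q)$ is a small constant, which is what keeps the generalization error below the available slack.
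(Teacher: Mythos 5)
Your proposal is correct and follows essentially the same route as the paper: establish DP of the transcript with respect to the tuple of seeds via advanced composition, invoke the DP generalization theorem to guarantee that most instances of $\AA$ remain accurate at every query, and finish with the DP median and a union bound. The parameter settings and space accounting match the paper's up to minor constant differences.

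One structural point worth noting: your per-estimator coupling bound in part (ii) --- replacing $S_j$ with a fresh $\bar S_j$ and applying the group privacy inequality to get $\Pr[\mathrm{bad}_{j,t}] \le e^\eps/10 + \delta' < 1/8$ --- is correct, but it is a detour. That bound gives the expected number of failures, which does not by itself yield a high-probability bound on the empirical fraction; and once you do invoke the DP generalization theorem (Theorem~\ref{thm:generalization}), it directly compares the empirical average $\frac{1}{k}\sum_j h_t(r_j)$ to the \emph{population} success probability $\E_r[h_t(r)] \ge 9/10$, bypassing the per-estimator marginal entirely. The paper goes straight from Lemma~\ref{lem:privacy} to Theorem~\ref{thm:generalization}, which is cleaner; your description of generalization ``promoting'' the expectation bound is a slight mischaracterization of what the theorem delivers, though you correctly identify in your closing remarks that the high-probability (not in-expectation) form is what is needed. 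The rest --- the rounding argument with $(1+\alpha/3)^2 \le 1+\alpha$, the rank argument for the DP median, and the space computation via $k = \Theta(\eps'^{-1}\log\frac{2q\log 2\tau}{\alpha\delta})$ with $\eps'^{-1} = O(\sqrt{q\log(q/\delta)})$ --- matches the paper.
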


\subsection{Tools from differential privacy}

In order to prove Lemma~\ref{lem:query_bounded}, we use the following set of tools from the differential privacy literature.
First, the following theorem allows for composing multiple applications of a DP mechanism.

\begin{theorem}[\cite{DworkRV10}]\label{thm:composition}
Let $\eps,\delta' \in (0,1]$ and let $\delta \in [0,1]$. An algorithm that allows 
for $q$ adaptive interactions with an $(\eps,\delta)$-DP 
mechanism is $(\eps',q\delta+\delta')$-DP for $\eps' \eqdef 
\sqrt{2q\ln(1/\delta')} \cdot \eps + 2q\eps^2$.
\end{theorem}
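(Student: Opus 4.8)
The plan is to establish the advanced composition bound via the privacy-loss random variable together with a Chernoff-type concentration argument, following Dwork, Rothblum, and Vadhan. I would model the $q$ adaptive interactions as a single composed mechanism $A=(A_1,\dots,A_q)$ operating on one of two fixed neighboring databases $S$ and $S'$: at step $i$ the identity of the invoked mechanism and the data it is fed may depend arbitrarily on the transcript $y_{<i}=(y_1,\dots,y_{i-1})$ observed so far, but, conditioned on $y_{<i}$, the map $S\mapsto A_i$ is still $(\eps,\delta)$-DP. It then suffices to show that for every such pair of neighbors and every transcript event $\mathcal T$ we have $\Pr[A(S)\in\mathcal T]\le e^{\eps'}\Pr[A(S')\in\mathcal T]+q\delta+\delta'$ (and symmetrically with $S$ and $S'$ exchanged).

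\emph{Reducing to pure DP.} Using the standard characterization of $(\eps,\delta)$-DP via $\delta$-approximate max divergence, each $(\eps,\delta)$-DP step $A_i$ admits an $(\eps,0)$-DP ``surrogate'' $\hat A_i$ that can be coupled with $A_i$ so that --- on each of $S$ and $S'$, and conditioned on any fixed prefix --- it agrees with $A_i$ except on an event of probability at most $\delta$. Replacing every $A_i$ by $\hat A_i$ alters the transcript distribution by at most $q\delta$ in total variation on either database, by a union bound over the $q$ steps, so it is enough to prove that $\hat A=(\hat A_1,\dots,\hat A_q)$ is $(\eps',\delta')$-DP, after which the $q\delta$ is added to the failure parameter.

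\emph{The pure-DP core.} For a transcript $y$ drawn from $\hat A(S)$, set the per-step privacy losses $Z_i=\ln\big(\Pr[\hat A_i(S)=y_i\mid y_{<i}]/\Pr[\hat A_i(S')=y_i\mid y_{<i}]\big)$ and $L=\sum_{i=1}^q Z_i$. By $(\eps,0)$-DP, $|Z_i|\le\eps$ pointwise, and, taking expectation over $y_i$ conditioned on $y_{<i}$, the quantity $\E[Z_i\mid y_{<i}]$ is a KL divergence between two distributions whose pointwise ratio lies in $[e^{-\eps},e^{\eps}]$, hence $\E[Z_i\mid y_{<i}]\le\eps(e^{\eps}-1)\le 2\eps^2$ for $\eps\in(0,1]$. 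Combining $|Z_i|\le\eps$ (which by Hoeffding's lemma gives $\E[e^{\lambda(Z_i-\E[Z_i\mid y_{<i}])}\mid y_{<i}]\le e^{\lambda^2\eps^2/2}$) with this mean bound yields $\E[e^{\lambda Z_i}\mid y_{<i}]\le e^{2\lambda\eps^2+\lambda^2\eps^2/2}$ for $\lambda\ge 0$, and multiplying over $i$ by iterated expectation gives $\E[e^{\lambda L}]\le e^{2q\lambda\eps^2+q\lambda^2\eps^2/2}$. A Chernoff bound at $\lambda=\sqrt{2\ln(1/\delta')/q}\,/\eps$, with $\eps'=2q\eps^2+\sqrt{2q\ln(1/\delta')}\,\eps$ exactly as in the statement (the $2q\eps^2$ term cancelling the mean contribution in the exponent), then gives $\Pr_{y\sim\hat A(S)}[L>\eps']\le\delta'$. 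Finally, for any transcript event $\mathcal T$ I would split it as $(\mathcal T\cap\{L\le\eps'\})\cup(\mathcal T\cap\{L>\eps'\})$: on the first part the likelihood ratio of $\hat A(S)$ to $\hat A(S')$ is at most $e^{\eps'}$ pointwise, and the second part has $\hat A(S)$-mass at most $\delta'$; this shows $\hat A$ is $(\eps',\delta')$-DP, and together with the surrogate step, $A$ is $(\eps',q\delta+\delta')$-DP.

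\emph{Main obstacle.} The boundedness and small-mean estimates on $Z_i$ and the Chernoff step are routine once set up; the delicate points are (i) making the $\delta\to 0$ surrogate reduction work \emph{conditionally}, so that its errors add cleanly across the adaptively chosen interactions rather than entangling with the conditioning on $y_{<i}$, and (ii) if one insists on the precise stated coefficient $\sqrt{2q\ln(1/\delta')}$ rather than a weaker constant, arranging the Chernoff optimization so that the $2q\eps^2$ bias term is absorbed exactly --- which is what forces using the sub-Gaussian MGF estimate on $Z_i$ rather than a cruder range-based Azuma bound.
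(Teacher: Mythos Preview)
The paper does not prove this theorem: it is stated with a citation to \cite{DworkRV10} and used as a black box in the proof of Lemma~\ref{lem:privacy}, with no argument supplied. So there is no ``paper's own proof'' to compare against.

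That said, your sketch is essentially the original Dwork--Rothblum--Vadhan argument for advanced composition: reduce each $(\eps,\delta)$-DP step to an $(\eps,0)$-DP surrogate at the cost of $q\delta$ in total variation, bound the conditional mean of the per-step privacy loss by a KL estimate of order $\eps(e^\eps-1)\le 2\eps^2$, use the $|Z_i|\le\eps$ range together with Hoeffding's lemma to get a sub-Gaussian MGF, and finish with a Chernoff bound and the standard good/bad split of the event $\mathcal T$. The steps are correct as outlined, and your identification of the two delicate points --- making the surrogate reduction compatible with the adaptive conditioning, and tuning the Chernoff parameter to hit the exact coefficient $\sqrt{2q\ln(1/\delta')}$ --- is accurate. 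For the purposes of this paper, though, none of this is needed: the authors simply invoke the result.
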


At the heart of the algorithm is a DP mechanism for 
computing a median of a set of items.
While sublinear-space algorithms for DP median estimation are known to exist \cite{ABC2021}, for our purposes it suffices to use a simple approach with near-linear space complexity.

\newcommand{\hasmed}{\cite[Theorem 2.6]{HassidimKMMS20}}
\begin{theorem}[\hasmed]\label{thm:dp_median}
For every $\eps,\delta \in (0,1)$, there exists an $(\eps,0)$-DP algorithm 
for databases $S \in X^{*}$ of size $\Omega\left(\frac{1}{\eps}\log\left(|X|/\delta\right)\right)$ that 
outputs an element $x \in X$ such that with probability at least $1-\delta$, there are at least $|S|/2 - \Gamma$ elements in $S$ that are bigger or equal to $x$ and at least $|S|/2 - \Gamma$ elements in $S$ that are smaller or equal to $x$, where $\Gamma \eqdef O\left(\frac{1}{\eps}\log\left(|X|/\delta\right)\right)$.
The algorithm uses $O(|S|)$ space.
\end{theorem}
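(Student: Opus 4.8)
The plan is to instantiate the exponential mechanism of McSherry and Talwar with a rank-based score function. Fix a total order on $X$ (the median problem only makes sense with one), and, writing $s\eqdef|S|$, for $x\in X$ and a database $S$ define
\[
\mathrm{bel}(x,S)\eqdef|\set{i:S_i\le x}|,\qquad \mathrm{abv}(x,S)\eqdef|\set{i:S_i\ge x}|,
\]
together with the score $u(x,S)\eqdef\min\set{\mathrm{bel}(x,S),\mathrm{abv}(x,S)}-\lceil s/2\rceil$. An element $x$ satisfies the guarantee required by the theorem precisely when $u(x,S)\ge-\Gamma$. Since two databases $S,S'$ that differ in one row have the same size, replacing one row changes each of $\mathrm{bel}(x,\cdot)$ and $\mathrm{abv}(x,\cdot)$ by at most $1$, so $u(\cdot,S)$ has global sensitivity at most $1$; in particular no composition theorem (e.g.\ Theorem~\ref{thm:composition}) is needed, this is a single-shot mechanism.

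\textbf{Privacy and utility.} Outputting $x\in X$ with probability proportional to $\exp(\eps\,u(x,S)/2)$ is $(\eps,0)$-DP by the standard exponential-mechanism analysis applied to a sensitivity-$1$ score. For utility, the standard tail bound for the exponential mechanism yields that, with probability at least $1-\delta$, the returned $x$ satisfies $u(x,S)\ge\mathrm{OPT}-\frac{2}{\eps}\ln(|X|/\delta)$ where $\mathrm{OPT}\eqdef\max_{y\in X}u(y,S)$. It remains to note $\mathrm{OPT}\ge 0$: taking $x^\star$ to be the value of the $\lceil s/2\rceil$-th order statistic of $S$ gives $\mathrm{bel}(x^\star,S)\ge\lceil s/2\rceil$ and $\mathrm{abv}(x^\star,S)\ge s-\lceil s/2\rceil+1\ge\lceil s/2\rceil$, hence $u(x^\star,S)\ge 0$. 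Therefore, with probability at least $1-\delta$, the output satisfies $\min\set{\mathrm{bel}(x,S),\mathrm{abv}(x,S)}\ge\lceil s/2\rceil-\frac{2}{\eps}\ln(|X|/\delta)\ge s/2-\Gamma$ for $\Gamma=O\!\left(\frac1\eps\log(|X|/\delta)\right)$, which is exactly the claimed statement; the hypothesis $|S|=\Omega(\frac1\eps\log(|X|/\delta))$ is there only to make this bound non-vacuous.

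\textbf{Space.} A naive implementation evaluates $u(\cdot,S)$ on all of $X$ and uses $\Omega(|X|)$ space, which can exceed the target $O(|S|)$. The key structural observation is that, along the order on $X$, the function $u(\cdot,S)$ is piecewise constant with breakpoints only at the at most $s$ distinct values occurring in $S$. Hence the algorithm sorts $S$ in $O(s)$ space, and for each of the $O(s)$ maximal sub-intervals of $X$ on which $u$ is constant it records the common value of $u$ there and the number of grid points the sub-interval contains (using the mild structure of $X$ — e.g.\ an arithmetic or geometric grid, as in our application); sampling then proceeds in two stages, first choosing a sub-interval with probability proportional to (its size)$\cdot\exp(\eps u/2)$ and then a point uniformly inside it, all within $O(s)$ space and $O(s\log s)$ time.

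\textbf{Main obstacle.} Conceptually the argument is routine; the only genuinely non-trivial ingredient is the $O(|S|)$ space bound, i.e.\ recognizing that the piecewise-constant structure of the rank score collapses the sum over $X$ to a sum over $O(|S|)$ cells. A secondary point to handle carefully is the ``differ in one row'' convention: the sensitivity computation above uses the swap model, where $s$ is fixed; under an add/remove convention one re-centers the score (replacing $\lceil s/2\rceil$ by a quantity of the form $\tfrac12(\mathrm{bel}+\mathrm{abv})$ and adjusting), which affects $\Gamma$ only by constant factors.
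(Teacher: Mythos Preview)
Your proposal is correct, but note that the paper does not actually prove this statement: it is quoted from \cite[Theorem~2.6]{HassidimKMMS20}, and the only content the paper adds is the one-line remark that the $O(|S|)$ space bound ``can be obtained using standard approaches in the DP literature, e.g., by applying the exponential mechanism with the Gumbel trick \cite{ABC2021,ExpMechanism}.'' Your write-up thus supplies strictly more detail than the paper itself.

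On the privacy and utility parts you are exactly in line with the intended argument (exponential mechanism on a rank score of sensitivity~$1$). On the space part, your two-stage sampler and the paper's suggested Gumbel trick are interchangeable implementations of the same mechanism resting on the same structural fact: the score $u(\cdot,S)$ is piecewise constant with $O(|S|)$ level sets. Your route samples a level set with probability proportional to $(\text{size})\cdot\exp(\eps u/2)$ and then a point uniformly inside it; the Gumbel route adds independent Gumbel noise to each score and takes the argmax, which over a level set of size $\ell$ collapses to a single Gumbel shifted by $\tfrac{2}{\eps}\ln\ell$, so one again touches only $O(|S|)$ cells. Both need the mild assumption you flag---that the cardinality of a sub-interval of $X$ is computable in $O(1)$---which holds in the paper's application where $X$ is a geometric grid of powers of $(1+\alpha/3)$.
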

We note that the original statement of the theorem did not mention the space complexity, but such space can be obtained using standard approaches in the DP literature, e.g., by applying the exponential mechanism with the Gumbel trick \cite{ABC2021,ExpMechanism}.

Finally, we use a known generalization theorem that shows that a deferentially 
private mechanism cannot heavily bias its computation on a sufficiently large 
set of random samples.

\begin{theorem}[\cite{BassilyNSSSU21,DworkFHPRR15}]\label{thm:generalization}
Let $\eps \in (0,1/3)$, $\delta \in (0,\eps/4)$, and $t \ge {\eps}^{-2}\log(2\eps/\delta)$. Let $\AA:X^t \to 2^X$ be an $(\eps,\delta)$-DP algorithm that operates on a database of size $t$ and outputs a predicate $h:X \to \set{0,1}$. Let $\DD$ be a distribution on $X$, let $S$ be a database containing $t$ elements drawn independently from $\DD$, and let $h \leftarrow \AA(S)$ be an output of $\AA$ on $S$. Then
\[\Pr_{S\sim\DD\atop h\leftarrow \AA(S)}
 \left[
 \left|\frac{1}{|S|}\sum_{x\in S}h(x) - \E_{x\sim\DD}[h(x)]\right| \ge 10\eps
 \right]
 \le
 \frac{\delta}{\eps}.
\]
\end{theorem}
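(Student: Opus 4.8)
The plan is to derive this as the standard ``differential privacy implies generalization'' transfer result, specialized to the case where the private mechanism outputs a single $\zo$-valued predicate. Write $\hat p_S(h)\eqdef\frac1t\sum_{x\in S}h(x)$ for the empirical mean and $p(h)\eqdef\E_{x\sim\DD}[h(x)]$ for the population mean; the goal is to show $\Pr_{S\sim\DD^t,\,h\leftarrow\AA(S)}[\,|\hat p_S(h)-p(h)|\ge 10\eps\,]\le\delta/\eps$. Since the upper- and lower-tail events are symmetric, it suffices to bound $\Pr[\hat p_S(h)-p(h)\ge 10\eps]$ by $\delta/(2\eps)$, and I will do this; let $E\eqdef\set{(S,h):\hat p_S(h)-p(h)\ge 10\eps}$.

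The easy ingredient is concentration for a \emph{fixed} predicate: for any $h$, the entries of $S$ are i.i.d.\ and $h$ is $\zo$-valued, so Hoeffding gives $\Pr_{S\sim\DD^t}[\hat p_S(h)-p(h)\ge 10\eps]\le e^{-2t(10\eps)^2}=e^{-200\,t\eps^2}$. The sample-size hypothesis $t\ge\eps^{-2}\log(2\eps/\delta)$ makes $t\eps^2\ge\log(2\eps/\delta)$, so this is at most $(\delta/(2\eps))^{200}$. Equivalently, if $h$ is drawn \emph{independently} of $S$ then $\Pr[(S,h)\in E]\le(\delta/(2\eps))^{200}$. The whole point of the constant $10$ (versus $2$, say) is to leave a large polynomial slack here for the transfer step.

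The crux is to transfer this bound to the correlated case $h\leftarrow\AA(S)$ using that $\AA$ is $(\eps,\delta)$-DP. The naive route---group privacy over all $t$ rows of $S$---loses a hopeless factor $e^{t\eps}$ since $\eps$ is a constant. The correct tool is the fact, established in \cite{DworkFHPRR15,BassilyNSSSU21}, that a differentially private mechanism run on an i.i.d.\ sample has \emph{bounded approximate max-information} between its input and its output, and that this bound grows only like $O(\eps^2 t+\eps\sqrt{t\log(1/\beta)})$ rather than like $\eps t$. Concretely this yields, for any event $E$ and any $\beta>0$, $\Pr_{S,\,h\leftarrow\AA(S)}[(S,h)\in E]\le 2^{I}\cdot\Pr_{S,h\text{ indep}}[(S,h)\in E]+\beta$ with $I=O(\eps^2 t+\eps\sqrt{t\log(1/\beta)})$. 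Choosing $\beta$ slightly below $\delta/(2\eps)$ and substituting the previous step, the factor $2^{I}$ is at most $(2\eps/\delta)^{O(1)}$ (uniformly in $t$, because whenever $t$ is much larger than the threshold the Hoeffding bound $e^{-200t\eps^2}$ shrinks far faster than $2^{I}$ grows), so the product $2^{I}(\delta/(2\eps))^{200}$ is at most $(\delta/(2\eps))^{199}$ up to lower-order factors, which together with the additive $\beta$ stays below $\delta/(2\eps)$. A union bound over the two tails gives $\delta/\eps$. The hypotheses $\eps<1/3$ and $\delta<\eps/4$ are exactly what make the max-information estimate and this balancing clean, and the extra slack from using $(\eps,\delta)$-DP rather than $(\eps,0)$-DP---handled by the standard decomposition of an approximate-DP mechanism, on each fixed input, into a pure-DP mechanism plus a $\delta$-fraction ``bad'' event---is what forces the final failure probability to be of order $\delta/\eps$ rather than something smaller.

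I expect the crux (the max-information / monitor transfer) to be the only real obstacle: establishing that DP with a constant privacy parameter still yields a $\poly(\eps)\cdot t$ bound on max-information rather than $\eps t$, and simultaneously absorbing the $\delta$ slack of approximate DP. This is precisely the content one imports from \cite{DworkFHPRR15,BassilyNSSSU21}; the rest is Hoeffding's inequality and bookkeeping to pin down the constants $10$, $1/3$, $\eps/4$, and $\delta/\eps$.
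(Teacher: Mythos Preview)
The paper does not prove Theorem~\ref{thm:generalization} at all: it is stated as a black-box tool imported from \cite{BassilyNSSSU21,DworkFHPRR15}, alongside the composition theorem (Theorem~\ref{thm:composition}) and the private-median primitive (Theorem~\ref{thm:dp_median}), and is then invoked in the proof of Lemma~\ref{lem:query_bounded}. So there is no ``paper's own proof'' to compare against; your proposal is an attempt to reconstruct the cited result itself.

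As a sketch of that reconstruction, your outline is broadly in the right spirit---Hoeffding for a fixed predicate, then a DP-based transfer to the adaptive case---but the transfer step is where the real content lies, and your account blurs two distinct arguments. The max-information route of \cite{DworkFHPRR15} works cleanly for pure $(\eps,0)$-DP but does \emph{not} straightforwardly yield the stated bound for $(\eps,\delta)$-DP; the decomposition ``pure-DP plus a $\delta$-fraction bad event'' you allude to is not how the approximate-DP case is actually handled, and naively trying it loses control of the constants. The bound as stated (with the $\delta/\eps$ failure probability and the specific parameter ranges) comes from the monitor/stability argument of \cite{BassilyNSSSU21}, which is a genuinely different proof strategy than bounding max-information. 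If you want a self-contained proof, you should follow the monitor argument directly rather than passing through max-information; otherwise, citing the result as the paper does is the appropriate move.
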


\subsection{Proof of Lemma~\ref{lem:query_bounded}}

Recall that Algorithm~\ref{alg:bounded_query} runs multiple copies $\AA_1$, \ldots, $\AA_k$ of an oblivious streaming algorithm $\AA$, where each $\AA_i$ uses an independent random string $r_i\in \zo^{*}$, selected from the same distribution as the randomness of $\AA$.
Let $R \eqdef \set{r_1,\ldots,r_k}$ be the collection of random strings used by 
the copies of $\AA$. We can view $R$ as a database, in which each $r_i$ is a 
row, and Algorithm~\ref{alg:bounded_query} as a mechanism that operates on it.
We now show that Algorithm~\ref{alg:bounded_query} does not reveal much about 
the collection of random strings it uses.

\begin{lemma}\label{lem:privacy}
Algorithm~\ref{alg:bounded_query} 
is $(\eps,\delta')$-DP with respect to $R$, the collection of random strings used by copies of $\AA$, where $\eps$ and $\delta'$ are as defined in the algorithm.
\end{lemma}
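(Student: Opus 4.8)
\textbf{Proof plan (Lemma~\ref{lem:privacy}).} The plan is to view the entire interaction of Algorithm~\ref{alg:bounded_query} with the adversary as an \emph{adaptive composition} of $q$ applications of an $(\eps',0)$-DP mechanism to the database $R=\set{r_1,\dots,r_k}$, and then to convert this into $(\eps,\delta')$-DP by a single invocation of advanced composition (Theorem~\ref{thm:composition}) together with the parameter settings in lines~\ref{line:eps_def}--\ref{line:k_def}. Because differential privacy is preserved under averaging over randomness that is independent of the database, I may fix the adversary's coin tosses and treat the adversary as deterministic; then the only randomness left in the transcript is the fresh, independent randomness used by the $k$ median sub-routines across the (at most) $q$ queries, and the transcript is a randomized function of $R$ alone.

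\emph{Step 1: a single query is $(\eps',0)$-DP.} Fix a query and condition on the outputs $a_1,\dots,a_{t-1}$ returned on the previous queries. For a deterministic adversary, the prefix of the update stream processed so far is then completely determined, so the internal state of each copy $\AA_j$ — hence its estimate $\gamma_j$ and the rounded, truncated value $\gamma'_j$ — is a deterministic function of the single row $r_j$ (together with the fixed quantities $a_1,\dots,a_{t-1}$). Thus the map $R\mapsto\set{\gamma'_1,\dots,\gamma'_k}$ acts row by row: changing one row of $R$ changes at most one element of this multiset, so neighbouring databases $R,R'$ are sent to neighbouring databases. The output $a_t$ is then the $(\eps',0)$-DP median of $\set{\gamma'_1,\dots,\gamma'_k}$ from Theorem~\ref{thm:dp_median} (invoked with privacy parameter $\eps'$; the value $\delta/(2q)$ it receives is an accuracy parameter and is irrelevant to privacy, and the database-size hypothesis of Theorem~\ref{thm:dp_median} holds by the choice of $k$ in line~\ref{line:k_def}). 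Since composing a neighbour-preserving map with an $(\eps',0)$-DP mechanism yields an $(\eps',0)$-DP mechanism, the map $R\mapsto a_t$ is $(\eps',0)$-DP for every fixed prefix $a_1,\dots,a_{t-1}$.

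\emph{Step 2: composition and parameter arithmetic.} The interaction is therefore an adaptive composition of at most $q$ mechanisms, each $(\eps',0)$-DP in $R$, the only adaptivity being that the auxiliary input of the $t$-th mechanism is the prefix of earlier outputs. Applying Theorem~\ref{thm:composition} with its parameters instantiated as $\eps\leftarrow\eps'$, $\delta\leftarrow 0$, $\delta'\leftarrow\delta'=\eps\delta/(10q)$ (line~\ref{line:delta_prime_def}) — whose hypotheses hold since $0<\eps'<\eps=1/100\le 1$ and $0<\delta'<1$ — shows that Algorithm~\ref{alg:bounded_query} is $(\eps'',\delta')$-DP with $\eps''=\sqrt{2q\ln(1/\delta')}\cdot\eps'+2q\eps'^2$. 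Plugging in $\eps'=\eps/\sqrt{8q\ln(1/\delta')}$ from line~\ref{line:eps_prime_def} gives $\sqrt{2q\ln(1/\delta')}\cdot\eps'=\eps/2$ and $2q\eps'^2=(\eps/2)\cdot\eps/(2\ln(1/\delta'))$. Since $\delta'=\delta/(1000q)\le 1/1000$ we have $\ln(1/\delta')\ge\ln 1000>6$, so $\eps/(2\ln(1/\delta'))<1$ and hence $\eps''<\eps/2+\eps/2=\eps$. An $(\eps'',\delta')$-DP mechanism with $\eps''\le\eps$ is in particular $(\eps,\delta')$-DP, which is the claim.

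\emph{Main obstacle.} The one point that requires care is that $R$ is not a static database: the update stream — and thus the states of the $\AA_j$'s — depends on the algorithm's own earlier outputs, which in turn depend on $R$. This feedback is precisely what is absorbed by using \emph{adaptive} composition, together with the observation that $R$ influences the transcript only through the $q$ median queries; once this is set up correctly, the rest is the routine parameter bookkeeping above.
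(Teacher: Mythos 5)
Your proof is correct and follows the same route as the paper's: each private-median query is $(\eps',0)$-DP with respect to $R$ by Theorem~\ref{thm:dp_median}, and Theorem~\ref{thm:composition} over the at most $q$ adaptive queries, together with the arithmetic from lines~\ref{line:eps_def}--\ref{line:eps_prime_def}, yields $(\eps,\delta')$-DP. You make explicit a step the paper treats as immediate---namely that after fixing the adversary's coins and conditioning on the prefix of outputs, each $\gamma'_j$ is a deterministic function of $r_j$ alone, so $R\mapsto\set{\gamma'_1,\ldots,\gamma'_k}$ is a row-by-row (hence neighbor-preserving) map and Theorem~\ref{thm:dp_median} indeed gives DP \emph{with respect to $R$}---and you correctly state the conclusion as $(\eps'',\delta')$-DP, whereas the paper's proof contains an apparent typo writing $(\eps'',\delta)$.
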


\begin{proof} 
The only way in which the algorithm reveals anything about the strings $r_i$ is 
by outputting the private median of current estimates of all algorithms. Note 
that the set of possible values of estimates $\gamma'_j$ is of size at most 
$1+\lceil\log_{1+\alpha/3}(\tau)\rceil = O(\alpha^{-1}\log (2\tau))$. Let 
$\eps'$ be as defined in Line~\ref{line:eps_prime_def}. It follows from 
Theorem~\ref{thm:dp_median} that each application of the median algorithm is 
$(\eps',0)$-DP with respect to $R$ and errs with probability at most $\delta/(2q)$ when the constant 
hidden by the asymptotic notation in the definition of $k$ in Line~\ref{line:k_def} is large enough. Applying Theorem~\ref{thm:composition}, 
we conclude that the entire algorithm is $(\eps'',\delta)$-DP with respect to $R$, where \[\eps'' 
\eqdef \sqrt{2q\ln(1/\delta')} \cdot \eps' + 2q\eps'^2 \le \frac{\eps}{2} + 
\frac{\eps^2}{4}\le\eps.\qedhere\]
\end{proof}

We now proceed to prove Lemma~\ref{lem:query_bounded}, i.e., that Algorithm~\ref{alg:bounded_query} has the desired properties.

\begin{proof}[Proof of Lemma~\ref{lem:query_bounded}.]
Recall that by Lemma~\ref{lem:privacy}, Algorithm~\ref{alg:bounded_query} is 
$(\eps,\delta')$-DP with respect to the collection of random strings that copies of $\AA$ use, where $\eps$ is defined in 
Line~\ref{line:eps_def} and $\delta'$ is defined in Line~\ref{line:delta_prime_def}. 
For any random string $r\in\zo^{*}$ that an instance of $\AA$ may use and for any 
$i \in [q]$, let $h_i(r):\zo^{*} \to \zo$ equal $1$ if $\AA$ outputs a \aprx[/3] to the function being computed
on the prefix of the stream, when asked the $i$-th query and using $r$ as its randomness.
Let $r_j$ be the randomness that $\AA_j$, the $j$-th instance of $\AA$, uses.
Since the adversary can be seen as an $(\eps,\delta')$-DP mechanism, and this includes all generated queries and updates to the stream, by Theorem~\ref{thm:generalization} and the union bound,
we get that
\[\left|\E_r[h_i(r)] - \frac{1}{k}\sum_{j=1}^k h_i(r_j)\right|\le 
10\eps=\frac{1}{10}\]
for all $i\in[q]$, with probability at least $1 - q \cdot \frac{\delta'}{\eps} \ge 1-\delta/10$
as long as $k \ge \eps^{-2}\log(2\eps/\delta')$. We now show that this condition holds for a sufficiently 
large constant hidden by the asymptotic notation in the definition of $k$ in 
Line~\ref{line:k_def}. First, observe that $\eps$ is defined to be a positive constant, and hence $\eps^{-2}$ is a constant as well and can easily be bounded by a sufficiently large constant hidden in the definition of $k$. It remains to bound $\log(2\eps/\delta') = \log 
(q/(5\delta))$. To this end, observe that the definition of $k$ also has two multiplicative terms. The first one is 
\[\frac{1}{\eps'} = \frac{\sqrt{8q\ln(1/\delta')}}{\eps} \ge 100\sqrt{8 \ln 10} \ge 1,\]
and the second one is  
\[\log\frac{2q\log 2\tau}{\alpha\delta} \ge \log\frac{2q}{\delta} = \log 2 + \log\frac{q}{\delta}.\]
Since $\frac{q}{\delta} \ge 1$, and, therefore, $\log \frac{q}{\delta} \ge 0$, their product
multiplied by a sufficiently large constant---which again can be hidden in the asymptotic notation in the definition of $k$---is
greater than $\log\frac{20q}{\delta} = \log 20 + \log\frac{q}{\delta}$. This 
finishes the proof that $k \ge \eps^{-2}\log(2\eps/\delta')$ can easily be achieved by properly adjusting constants.

Since $\AA$ is correct with probability at least $9/10$ on any fixed data stream,
this implies that for each $i\in[q]$, at least $\frac{4}{5}k$ predicates $h_i(r_j)$ are $1$. In other words,  with probability at least $1-\delta/10$, for each query from the adversary,
at least $\frac{4}{5}k$ of the collected estimates $\gamma_j$ of the current value of $f$ are its \aprxs[/3].
Note that the rounding step can only increase the approximation error by a factor of at most $(1+\alpha/3)$, which means that at least $\frac{4}{5}k$ of estimates $\gamma'_j$ are \aprx of the current value of $f$ because $(1+\alpha/3)^2 < 1+\alpha$.

Now note that the private median algorithm returns an estimate that is  
greater than or equal to at least $2/5$ of estimates $\gamma'_j$ and also 
smaller than or equal to at least $2/5$ of the same estimates with probability at least $1-\frac{\delta}{2q}$.
As long as this algorithm outputs such an estimate, and 
as long as the fraction of bad estimates is at most $1/5$, this means that the algorithm 
outputs a \aprx to the value of $f$ at the query point.
By the union bound this occurs for all queries with probability at least
$1-\delta/2-\delta/10\ge 1-\delta$.

The space complexity of the algorithm is dominated by 
the space to store the instances of $\AA$. Note that there are 
$k = O(\sqrt{q \log (2q/\delta)} \cdot \log\frac{2q\log2\tau}{\alpha\delta})$ of them.
\end{proof}

\section{Bounded Change}

As discussed in the introduction, our frequency moment estimation algorithm gains from the fact that when the vector $v$ is $k$-dense for some value of $k$, the value of $\pmoment{v}$ cannot change too rapidly. We formally state and prove this below.

\subsection{Moments $p \in [0,1]$}

\begin{lemma}\label{lem:bc_p01}
Let $v,v'\in \Z^n$, $p \in [0,1]$, $\alpha \in (0,1)$, and $k \in \Z_{+}$.
If $v$ is $k$-dense and $\|v-v'\|_1 \le \alpha k$, then $\pmoment{v'}$ is a \aprx to $\pmoment{v}$.
\end{lemma}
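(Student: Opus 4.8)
The plan is to bound the additive change $\bigl|\pmoment{v'} - \pmoment{v}\bigr|$ coordinate by coordinate, using the subadditivity of $t \mapsto |t|^p$ (valid for $p \le 1$), and then to convert this into a multiplicative guarantee by noting that a $k$-dense integer vector satisfies $\pmoment{v} \ge k$.

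First I would record the elementary inequality $|a+b|^p \le |a|^p + |b|^p$ for all real $a,b$ and $p \in [0,1]$, under the convention $0^0 = 0$; this follows from the monotonicity of $t \mapsto t^p$ on $[0,\infty)$ together with its concavity and vanishing at the origin (and is checked directly for $p = 0$). Applying it with $a = v_i$, $b = v_i' - v_i$, and then again with the roles of $v_i$ and $v_i'$ swapped, gives $\bigl||v_i'|^p - |v_i|^p\bigr| \le |v_i - v_i'|^p$ for every coordinate $i$. Summing over $i$ then yields $\bigl|\pmoment{v'} - \pmoment{v}\bigr| \le \sum_{i=1}^n |v_i - v_i'|^p$.

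Next I would use integrality. Since $v, v' \in \Z^n$, each $|v_i - v_i'|$ is a nonnegative integer, so $|v_i - v_i'|^p \le |v_i - v_i'|$ (trivially if the two agree, and because $t^p \le t$ for every integer $t \ge 1$ otherwise); hence $\sum_{i=1}^n |v_i - v_i'|^p \le \|v - v'\|_1 \le \alpha k$ by hypothesis. Finally, $k$-density means $v$ has at least $k$ nonzero coordinates, each contributing $|v_i|^p \ge 1$ to $\pmoment{v}$ (again by integrality, since $|v_i| \ge 1$ and $p \ge 0$), so $\pmoment{v} \ge k$. Chaining these bounds, $\bigl|\pmoment{v'} - \pmoment{v}\bigr| \le \alpha k \le \alpha \pmoment{v}$, which rearranges to $(1-\alpha)\pmoment{v} \le \pmoment{v'} \le (1+\alpha)\pmoment{v}$, the desired conclusion.

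I do not expect a real obstacle here; the argument is only a few lines. The one point that genuinely requires care is that integrality is essential in the step $|v_i - v_i'|^p \le |v_i - v_i'|$ — for real-valued vectors, many tiny perturbations could make $\sum_i |v_i - v_i'|^p$ far larger than $\|v - v'\|_1$, so the hypothesis $v, v' \in \Z^n$ is being used in an essential way. A secondary check is the $p = 0$ boundary case, where one should confirm that both the subadditivity inequality and the bound $\pmoment{v} \ge k$ remain valid under the convention $0^0 = 0$ and $x^0 = 1$ for $x \ne 0$.
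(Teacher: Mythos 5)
Your proof is correct and follows essentially the same outline as the paper's: bound the additive change coordinate-wise by a Lipschitz-type inequality $\bigl||v_i|^p - |v_i'|^p\bigr| \le |v_i - v_i'|$ on integers, sum to get $\|v-v'\|_1 \le \alpha k$, and use $k$-density to conclude $\pmoment{v} \ge k$. The only difference is how the pointwise bound is established: you derive it from subadditivity of $t \mapsto t^p$ together with integrality of the difference ($t^p \le t$ for integer $t \ge 0$), while the paper checks $|f(t)-f(t+1)|\le 1$ directly at $t \in \{-1,0\}$ and via the derivative bound $|f'| \le 1$ on $|t| \ge 1$ elsewhere; both are valid two-line arguments and use integrality in an essential way.
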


\begin{proof}
Consider a function $f:\R \to \R$ defined as $f(t) \eqdef |t|^p$. We claim that for any $t \in \Z$, $|f(t) - f(t+1)| \le 
1$. This is easy to verify for $t \in \set{-1,0}$, because $f(-1) = f(1) = 1$ 
and $f(0)=0$. Since $f$ is differentiable in $(-\infty,-1]\cup[1,\infty)$ with 
the absolute value of the derivative bounded by $1$, the claim holds in that range as well, i.e., for other $t \in \Z \setminus \set{-1,0}$.
This implies that for any $t,t' \in \Z$, $|f(t) - f(t')| \le |t - t'|$.

We have
\[\left|\pmoment{v} - \pmoment{v'}\right| 
\le \sum_{i=1}^{n} \left| f(v_i) - f(v'_i) \right| 
\le \sum_{i=1}^{n}|v_i - v'_i|
= \|v - v'\|_1
\le \alpha k.
\]
Since $v$ is $k$-dense, for at least $k$ of its coordinates $i$, $|v_i| \ge 1$, and hence $\pmoment{v} \ge k$.
We therefore have $\pmoment{v'} = \pmoment{v} + \left(\pmoment{v'} - \pmoment{v}\right) \le \pmoment{v} + \alpha k \le (1+\alpha) \pmoment{v}$. Analogously, $\pmoment{v'} \ge \pmoment{v} - \alpha k \ge (1-\alpha)\pmoment{v}$.
\end{proof}

\subsection{Moments $p \in [1,\infty)$}

We use the following two well-known facts, which are easy to verify via basic calculus.

\begin{fact}\label{fact:ineq_lb}
For $p\ge 1$ and $\alpha \in (0,1)$, $(1-\alpha)^p \ge 1-\alpha p$.
\end{fact}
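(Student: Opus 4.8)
The final statement to prove is Fact~\ref{fact:ineq_lb}: for $p \ge 1$ and $\alpha \in (0,1)$, $(1-\alpha)^p \ge 1 - \alpha p$. This is a standard Bernoulli-type inequality, and the plan is to prove it by a one-variable calculus argument treating $\alpha$ as the variable with $p$ fixed.

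First I would define $g(\alpha) \eqdef (1-\alpha)^p - (1 - \alpha p)$ on the interval $[0,1)$ and observe that $g(0) = 1 - 1 = 0$, so it suffices to show $g$ is non-decreasing on $[0,1)$, i.e., $g(\alpha) \ge g(0) = 0$ for all $\alpha$ in range. Differentiating gives $g'(\alpha) = -p(1-\alpha)^{p-1} + p = p\left(1 - (1-\alpha)^{p-1}\right)$. Since $p \ge 1$, the exponent $p - 1 \ge 0$, and since $0 \le 1 - \alpha \le 1$, we have $(1-\alpha)^{p-1} \le 1$ (a number in $[0,1]$ raised to a non-negative power stays in $[0,1]$). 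Hence $g'(\alpha) \ge 0$ throughout $[0,1)$, so $g$ is non-decreasing, giving $g(\alpha) \ge g(0) = 0$, which is exactly the claimed inequality.

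There is essentially no main obstacle here; the only minor subtlety is the boundary behavior at $p = 1$ (where both sides equal $1 - \alpha$ and the inequality is an equality, consistent with $g'(\alpha) = p(1 - (1-\alpha)^0) = 0$) and the mild care needed in asserting $(1-\alpha)^{p-1} \le 1$ when $p - 1 = 0$ (in which case $(1-\alpha)^0 = 1$ and the bound holds with equality). Alternatively, one could invoke convexity of $t \mapsto t^p$ on $[0,\infty)$: the tangent line to $t^p$ at $t = 1$ is $1 + p(t-1)$, and convexity gives $t^p \ge 1 + p(t-1)$ for all $t \ge 0$; substituting $t = 1 - \alpha$ yields $(1-\alpha)^p \ge 1 - p\alpha$ immediately. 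Either route is a two-line argument, so I would present the calculus version for self-containedness, matching the ``easy to verify via basic calculus'' remark already in the text.
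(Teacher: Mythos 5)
Your proof is correct, and it follows exactly the ``basic calculus'' route the paper alludes to (the paper itself omits the proof, calling it easy to verify via basic calculus). The derivative computation $g'(\alpha) = p\left(1-(1-\alpha)^{p-1}\right) \ge 0$ is valid for $p \ge 1$ and $\alpha \in [0,1)$, and the alternative convexity-tangent-line argument you mention is an equally standard two-line proof.
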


\begin{fact}\label{fact:ineq_ub}
For $\alpha \in (0,1)$, $e^\alpha \le (1+2\alpha)$.
\end{fact}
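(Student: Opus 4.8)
The plan is to exploit convexity of the exponential function. Since $\exp$ is convex on all of $\R$, on the interval $[0,1]$ its graph lies weakly below the chord joining the two endpoints $(0, e^0)$ and $(1, e^1)$; that chord is the affine function $\ell(\alpha) = 1 + (e-1)\alpha$. Hence for every $\alpha \in [0,1]$ we have $e^\alpha \le 1 + (e-1)\alpha$. To finish, I would invoke the crude numerical bound $e < 3$, so that $e - 1 < 2$, and since $\alpha \ge 0$ this yields $e^\alpha \le 1 + (e-1)\alpha \le 1 + 2\alpha$. The endpoints are included in this argument, so the inequality certainly holds on the open interval $(0,1)$ as stated.

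If one prefers to avoid even a qualitative numerical input, an alternative is a direct sign analysis of the derivative: set $g(\alpha) \eqdef 1 + 2\alpha - e^\alpha$, so $g(0) = 0$ and $g'(\alpha) = 2 - e^\alpha$, which is positive for $\alpha < \ln 2$ and negative for $\alpha > \ln 2$. Thus $g$ is increasing on $[0, \ln 2]$ and decreasing on $[\ln 2, 1]$; on the first subinterval $g \ge g(0) = 0$, while on the second subinterval $g$ attains its minimum over $[\ln 2, 1]$ at the right endpoint, where $g(1) = 3 - e > 0$. Either way $g \ge 0$ throughout $[0,1]$, which is exactly the claim.

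There is no genuine obstacle here: the statement is elementary calculus and either route is a few lines. The only minor point requiring a touch of care is the one numerical comparison used to replace $e - 1$ by $2$ (equivalently, $e < 3$); everything else is immediate from convexity of $\exp$, or from the one-line monotonicity argument on $g$.
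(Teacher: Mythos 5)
Your proof is correct. The paper states Fact~\ref{fact:ineq_ub} without proof, remarking only that it is ``easy to verify via basic calculus,'' and both of your routes (the convexity/chord argument with $e<3$, and the derivative-sign analysis of $g(\alpha)=1+2\alpha-e^\alpha$) are exactly the kind of elementary calculus argument the authors had in mind.
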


\begin{lemma}\label{lem:bc_p1inf}
Let $v,v'\in \Z^n$, $p \in [1,\infty)$, $\alpha \in (0,1)$, and $k \in \Z_{+}$.
If $v$ is $k$-dense and $\|v-v'\|_1 \le \frac{\alpha}{8p}(\frac{\alpha k}{4})^{1/p}$, then
$\pmoment{v'}$ is a \aprx to $\pmoment{v}$.
\end{lemma}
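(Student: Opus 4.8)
The plan is to mimic the proof of Lemma~\ref{lem:bc_p01}, but replace the crude Lipschitz bound $|f(t)-f(t+1)|\le 1$ with one that accounts for the fact that $f(t)=|t|^p$ grows faster for large $|t|$. First I would record a pointwise bound: for the function $f(t)=|t|^p$ with $p\ge 1$, for integers $t,t'$ one has $|f(t)-f(t')|\le p\,|t-t'|\cdot(\max\{|t|,|t'|\}+1)^{p-1}$, since $|f'(s)|=p|s|^{p-1}$ and $|s|$ stays below $\max\{|t|,|t'|\}$ on the segment between them (the $+1$ absorbs the non-smoothness near $0$). Summing coordinatewise,
\[
\bigl|\pmoment{v}-\pmoment{v'}\bigr|\le p\sum_{i}|v_i-v'_i|\cdot(\|v\|_\infty+\|v'\|_\infty+1)^{p-1}\le p\,\|v-v'\|_1\cdot(2M+1)^{p-1},
\]
where $M\eqdef\max\{\|v\|_\infty,\|v'\|_\infty\}$. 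The point is that $M$ cannot be too large relative to $\pmoment{v}$, because a single coordinate equal to $M$ already contributes $M^p$, so $\pmoment{v}\ge M^p$, hence $M\le\pmoment{v}^{1/p}$.

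Next I would combine this with $k$-density. Since $v$ is $k$-dense, $\pmoment{v}\ge k$, so also $M\le\pmoment{v}^{1/p}$, and we need to turn $(2M+1)^{p-1}$ into something comparable to $\pmoment{v}^{(p-1)/p}$; crudely $(2M+1)^{p-1}\le(3M)^{p-1}\le 3^{p-1}\pmoment{v}^{(p-1)/p}$ when $M\ge 1$ (and the case $M=0$ is trivial since then $v=0$, contradicting $k$-density for $k\ge 1$; and $M=1$ is handled directly). Plugging the hypothesis $\|v-v'\|_1\le\frac{\alpha}{8p}(\frac{\alpha k}{4})^{1/p}\le\frac{\alpha}{8p}\pmoment{v}^{1/p}$ (using $\pmoment{v}\ge k$ so $(\alpha k/4)^{1/p}\le(\alpha/4)^{1/p}\pmoment{v}^{1/p}\le\pmoment{v}^{1/p}$) gives
\[
\bigl|\pmoment{v}-\pmoment{v'}\bigr|\le p\cdot\frac{\alpha}{8p}\pmoment{v}^{1/p}\cdot 3^{p-1}\pmoment{v}^{(p-1)/p}=\frac{\alpha\cdot 3^{p-1}}{8}\pmoment{v}.
\]
This is not quite $\le\alpha\pmoment{v}$ once $p$ is moderately large, which tells me the constants in the statement must be exploited more carefully — in particular the $(\alpha k/4)^{1/p}$ factor, together with Facts~\ref{fact:ineq_lb} and~\ref{fact:ineq_ub}, which suggests the intended argument works multiplicatively rather than additively. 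So the cleaner route is: bound $\pmoment{v'}/\pmoment{v}$ directly. For each coordinate, $|v'_i|\le|v_i|+|v_i-v'_i|$, so $|v'_i|^p\le(|v_i|+\|v-v'\|_1)^p$; one then wants $\sum_i(|v_i|+\|v-v'\|_1)^p\le(1+\alpha)\sum_i|v_i|^p$. Writing $\eta\eqdef\|v-v'\|_1$ and splitting into coordinates with $|v_i|\ge\eta/\alpha'$ versus small ones for a suitable $\alpha'$, and using $(a+b)^p\le a^p e^{pb/a}\le a^p(1+2pb/a)$ via Fact~\ref{fact:ineq_ub} on the large coordinates while crudely bounding the (few, small-valued) remaining coordinates, the total excess is controlled by $\eta\cdot p\cdot\pmoment{v}^{(p-1)/p}$-type terms plus a term of size $(\#\text{coords})\cdot(\eta/\alpha')^p$, and here $k$-density gives $\pmoment{v}\ge k$ which dominates the latter once $\eta\le\frac{\alpha}{8p}(\alpha k/4)^{1/p}$. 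The lower bound $\pmoment{v'}\ge(1-\alpha)\pmoment{v}$ is symmetric, using $|v'_i|\ge|v_i|-\eta$ on large coordinates and Fact~\ref{fact:ineq_lb}.

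The main obstacle I anticipate is exactly this bookkeeping: getting the constants to line up so that $\frac{\alpha}{8p}(\frac{\alpha k}{4})^{1/p}$ is precisely enough, which requires being slightly clever about which coordinates to treat multiplicatively (those where $|v_i|$ is large compared to $\eta$, where $(|v_i|+\eta)^p\le(1+\alpha/(4p))^p|v_i|^p\le e^{\alpha/4}|v_i|^p\le(1+\alpha/2)|v_i|^p$) versus the small ones, whose number is at most $k$ but whose individual $p$-th powers are at most $(\text{something}\cdot\eta)^p\le\frac{\alpha}{4}\cdot\frac{k}{(\#\text{coords})}$ on average — i.e., one wants their total contribution to the change to be at most $\frac{\alpha}{4}k\le\frac{\alpha}{4}\pmoment{v}$. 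Threading $\alpha/(4p)$ through the "large coordinate" regime (so that $(\alpha k/4)^{1/p}\le|v_i|$ becomes the threshold) is what makes the stated bound $\frac{\alpha}{8p}(\frac{\alpha k}{4})^{1/p}$ natural, with the extra factor $1/2$ and $1/8$ giving room for the two facts and the split. Everything else is routine: it is just $(a+b)^p$ estimates, the trivial bound $\pmoment{v}\ge k$ from $k$-density, and a union of the upper and lower one-sided bounds.
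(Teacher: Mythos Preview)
Your final plan is structurally the paper's proof: split coordinates into ``large'' ones treated multiplicatively via Facts~\ref{fact:ineq_lb}--\ref{fact:ineq_ub}, and ``small'' ones whose total $p$-th power is bounded crudely by $\frac{\alpha}{2}\pmoment{v}$. But two concrete points in your sketch would not go through as written.

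First, your threshold is global: you split according to whether $|v_i|\ge c\,\eta$ with $\eta=\|v-v'\|_1$. The paper instead uses the \emph{per-coordinate} threshold $|\Delta_i|\le\frac{\alpha}{4p}|v_i|$, where $\Delta=v-v'$. This matters precisely for the small-coordinate part: with the per-coordinate threshold, $i\in\Ilg$ gives $|v_i|<\frac{4p}{\alpha}|\Delta_i|$, so
\[
\sum_{i\in\Ilg}|v'_i|\le\sum_{i\in\Ilg}\Bigl(1+\tfrac{4p}{\alpha}\Bigr)|\Delta_i|\le\tfrac{8p}{\alpha}\|\Delta\|_1\le\Bigl(\tfrac{\alpha k}{4}\Bigr)^{1/p},
\]
and then convexity ($\sum x_i^p\le(\sum x_i)^p$ for $p\ge1$, $x_i\ge0$) immediately gives $\sum_{\Ilg}|v'_i|^p\le\frac{\alpha k}{4}\le\frac{\alpha}{4}\pmoment{v}$. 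Your global threshold does not hand you control of $\sum_{\text{small}}|v_i|$ in terms of $\|\Delta\|_1$, so this step stalls.

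Second, your claim that the number of small coordinates is ``at most $k$'' is simply false --- $k$-density is a \emph{lower} bound on the support of $v$, not an upper bound, and there is no reason the small set cannot have size up to $n$. The paper never counts the small coordinates; it bounds $\sum_{\Ilg}|v'_i|$ directly and then applies convexity, which is the missing ingredient in your sketch. Once you switch to the per-coordinate split and use convexity on $\Ilg$, the constants line up exactly with the stated hypothesis, and both the $+\alpha$ and $-\alpha$ sides follow from the single bound $\bigl|\pmoment{v}-\pmoment{v'}\bigr|\le\alpha\pmoment{v}$.
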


\begin{proof}
Let $\Delta \eqdef v-v'$. 
We partition the set of indices, $[n]$, into two sets, 
$\Ism$ and $\Ilg$, based on how $|\Delta_i|$ compares to $|v_i|$. 
We have $\Ism \eqdef \set{i \in [n] : |\Delta_i| \le \frac{\alpha}{4p}|v_i|}$ and $\Ilg 
\eqdef [n] \setminus \Ism$.

For $i \in \Ism$, we have \[ 
\left((1-\frac{\alpha}{4p}) |v_i|\right)^p 
\le 
|v'_i|^p
\le
\left((1+\frac{\alpha}{4p}) |v_i|\right)^p.
\]
The left--hand side can be bounded from below by $(1-\alpha/4)|v_i|^p$, using Fact~\ref{fact:ineq_lb}.
The right--hand side is at most $\left(e^{\alpha/4p}\right)^p |v_i|^p \le e^{\alpha/4} |v_i|^p \le
(1+\frac{\alpha}{2})|v_i|^p$, where the last inequality uses Fact~\ref{fact:ineq_ub}.
This implies that $\left||v_i|^p - |v'_i|^p\right| \le \frac{\alpha}{2} |v_i|^p$.
As a corollary, we obtain
\[\sum_{i \in \Ism} \left||v_i|^p - |v'_i|^p\right| \le \frac{\alpha}{2} \sum_{i \in \Ism} |v_i|^p
\le \frac{\alpha}{2}\pmoment{v}.
\]

For $i \in \Ilg$, we have 
\[
\sum_{i\in\Ilg}|v'_i|
\le
\sum_{i\in\Ilg}|v_i| + |\Delta_i|
\le
\sum_{i\in\Ilg}\left(1+\frac{4p}{\alpha}\right) |\Delta_i|
\le
\sum_{i\in\Ilg}\frac{8p}{\alpha} |\Delta_i|
\le
\frac{8p}{\alpha}\|\Delta\|_1
\le
\left(\frac{\alpha k}{4}\right)^{1/p}.
\]
This implies that, due to the convexity of the function $f(x) \eqdef x^p$ for $p\ge1$,
\[
\sum_{i\in\Ilg} |v'_i|^p
\le \left(\sum_{i \in \Ilg}|v'_i|\right)^p
\le \frac{\alpha k}{4}.
\]
The same bound holds for $v$, i.e.,
\[
\sum_{i\in\Ilg} |v_i|^p
\le \left(\sum_{i \in \Ilg}|v'_i|\right)^p
\le \frac{\alpha k}{4}.
\]
Combining these bounds, we get a bound on the sum of differences in coordinates in $\Ilg$
\[
\sum_{i\in\Ilg} \left||v_i|^p - |v'_i|^p\right|
\le
\sum_{i\in\Ilg} |v_i|^p + |v'_i|^p
\le \frac{\alpha k}{4} + \frac{\alpha k}{4}
= \frac{\alpha k}{2}
\le \frac{\alpha}{2}\pmoment{v},
\]
where the last inequality follows from the fact that $v$ is $k$-dense, and therefore, $\pmoment{v} \ge k$.

Overall, combining our knowledge for both $\Ism$ and $\Ilg$,
\begin{align*}
\left| \pmoment{v} - \pmoment{v'} \right|
&= \left| \sum_{i=1}^{n} |v_i|^p - |v'_i|^p \right|
\le \sum_{i=1}^{n} \left| |v_i|^p - |v'_i|^p \right| \\
&\le \sum_{i\in\Ism} \left| |v_i|^p - |v'_i|^p \right|
+ \sum_{i\in\Ilg} \left| |v_i|^p - |v'_i|^p \right| \\
&\le \alpha\pmoment{v}.
\end{align*}
This immediately implies our main claim.
\end{proof}

\section{Proof of the Main Result}

We start by restating our main result. This version has more details than the 
simplified version, which was presented as Theorem~\ref{thm:main_sketch} in the 
introduction.

\begin{theorem}[Adversarially robust moment estimation algorithm, full version of Theorem~\ref{thm:main_sketch}] \label{thm:main}
Algorithm~\ref{alg:main} is a \aprx adversarially robust streaming algorithm 
for the $p\th$-moment
with success probability $1-\delta-O(n^{-3})$ for streams of length $m$. 
The space complexity of the algorithm for different values of $p$ is specified in Table~\ref{table:our_results}.
\end{theorem}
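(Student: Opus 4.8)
The plan is to analyze Algorithm~\ref{alg:main} by separately establishing (i) correctness of each individual estimator it maintains, (ii) correctness of the regime transitions, (iii) correctness of the output in each regime, and (iv) a bound on the number of queries actually issued to the bounded-query subroutines, which then feeds back into their space complexity via Lemma~\ref{lem:query_bounded}. First I would condition on the (high-probability) good events: the sparse recovery structure $\Asparse$ from Theorem~\ref{thm:sparse} correctly recovers every $k$-sparse vector encountered (failure probability $O(n^{-3})$), and the two bounded-query robust estimators $\Adensity$ and $\Aapprox$ answer all their queries correctly (failure probability $\delta/2$ each, provided the number of queries is within their budget). Under these events everything downstream is deterministic.

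Next I would handle the regime logic and the sparse-regime output. In the sparse regime the algorithm maintains $v$ and $\Mexact = \pmoment{v}$ exactly, so the output is trivially exact as long as we are genuinely sparse; the point is that we only leave the sparse regime once $\|v\|_0 \ge 4T$, and we enter the sparse regime only after $\Asparse$ has recovered $v$ exactly, which is valid because at the moment of recovery $\kapprox \le 2T$ and $\kapprox$ is a $(1\pm.25)$-approximation to $\|v\|_0$, forcing $\|v\|_0 \le 2.5T < 4T$, hence $v$ is $k$-sparse for $k = \lceil 4T\rceil$ as required by $\Asparse$. The crucial invariant to verify is that whenever $\vregime=\sparse$, the true vector is $4T$-sparse (so $\Mexact$ tracking is legitimate and the explicit $O(T)$-space representation suffices): this holds at initialization, is preserved by the check $\|v\|_0 \ge 4T \Rightarrow \vregime \leftarrow \dense$, and is re-established upon each $\dense\to\sparse$ transition by the argument just given. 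For the dense-regime output, the algorithm outputs $\Mapprox$, the most recent estimate from $\Aapprox$, refreshed every $\vinterval$ updates; here I invoke Lemma~\ref{lem:bc_p01} (for $p\in[0,1]$) or Lemma~\ref{lem:bc_p1inf} (for $p>1$) to argue that between two consecutive refreshes the true $p$-th moment changes by at most a $(1\pm\alpha/4)$ factor — this is exactly why $\vinterval$ was chosen as it was, matching the $\|v-v'\|_1$ bounds in those lemmas with $k$ of order $T$ — using the fact that throughout a dense stretch the vector stays $\Omega(T)$-dense (it is $4T$-dense when entering, and we only return to sparse once $\kapprox \le 2T$, i.e., $\|v\|_0 \le 2.5T$; combined with needing to drop below $4T$, the vector is at least, say, $2T/1.25 = 1.6T$... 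I would pin down the exact constant). Composing the $(1\pm\alpha/4)$ staleness error with the $(1\pm\alpha/4)$-approximation quality of $\Aapprox$ and the $(1\pm.25)$ slack in $\Adensity$, and checking $(1+\alpha/4)^2 \le 1+\alpha$, gives a valid \aprx at every dense step.

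Then I would bound the query counts. $\Adensity$ is queried only when $\vcount$ is a multiple of $\lfloor T/10\rfloor$, so at most $m/\lfloor T/10\rfloor$ times — exactly its budget. $\Aapprox$ is queried only when $\vcount$ is a multiple of $\vinterval$, so at most $m/\vinterval$ times — again exactly its budget. So the hypotheses of Lemma~\ref{lem:query_bounded} are met with $q = m/\lfloor T/10\rfloor$ resp.\ $q = m/\vinterval$, and the space of each robust estimator is $\tilde O(\sqrt{q})$ times the oblivious space from Theorem~\ref{thm:approx_alg}. Adding the $O(T\polylog n)$ for $\Asparse$ and the explicit representation, the total space is $\tilde O(T) + \tilde O(\sqrt{m/T}\cdot S_{\text{obl}})$; optimizing $T$ against the value of $p$ (balancing $T$ with $\sqrt{m/T}$, weighted by whether $\vinterval$ scales like $T$ or like $T^{1/p}$, and by the $n$-dependence of $S_{\text{obl}}$ for $p>2$) yields the threshold choices and the bounds in Table~\ref{table:our_results}. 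Finally, a union bound over the $O(n^{-3})$ sparse-recovery failure and the two $\delta/2$ estimator failures gives overall success probability $1-\delta-O(n^{-3})$.

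The main obstacle I expect is the bookkeeping around the "buffer zone" between sparsity $T$ and $4T$: I need to argue simultaneously that (a) the vector never fails to be $4T$-sparse while in the sparse regime (so the explicit representation and $\Asparse$'s sparsity parameter are valid), and (b) the vector is always $\Omega(T)$-dense while in the dense regime (so the bounded-change lemmas apply with the intended $k$). The tension is that $\kapprox$ is only an approximate, and only periodically-updated, proxy for $\|v\|_0$; I must verify that the lag of up to $\lfloor T/10\rfloor$ updates between refreshes of $\kapprox$, combined with its multiplicative $1\pm.25$ error, cannot push the true density outside the window in which either invariant is needed — e.g.\ that when we decide $\kapprox \le 2T$ the density is genuinely at most $\approx 2.5T + T/10 < 4T$, and conversely that a vector staying in the dense regime cannot have density dip below the threshold required by Lemmas~\ref{lem:bc_p01} and \ref{lem:bc_p1inf} for the chosen $\vinterval$. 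Getting all these constants to line up (they are the reason for the specific $4T$, $2T$, $T/10$, $\alpha T/4$, $\frac{\alpha}{32p}(\frac{\alpha T}{16})^{1/p}$ choices) is the delicate part; once it is done the space analysis and probability union bound are routine.
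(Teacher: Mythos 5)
Your proposal follows the same decomposition as the paper's proof: condition on the high-probability good events, establish the two regime invariants (sparse regime $\Rightarrow$ $v$ is $4T$-sparse; dense regime $\Rightarrow$ $v$ is $T$-dense), use the bounded-change lemmas (Lemma~\ref{lem:bc_p01}, Lemma~\ref{lem:bc_p1inf}) to control staleness of $\Mapprox$ between refreshes, count queries to feed into Lemma~\ref{lem:query_bounded}, and balance $T$ to equalize the sparse-side $O(T\polylog n)$ against the dense-side $\tO(\sqrt{q}\cdot S_{\text{obl}})$. All the right ingredients are present and the conclusion is correct.

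The one spot where your argument is muddled rather than merely unpolished is the lower bound on density during the dense regime. You write that the vector ``is at least, say, $2T/1.25 = 1.6T$,'' which is not a self-standing claim: $\kapprox \le 2T$ is the exit condition, so the relevant fact is the opposite one, that while we remain dense every $\kapprox$ reading satisfies $\kapprox > 2T$. The paper's clean route (worth adopting explicitly) is to formulate the invariant as ``if $v$ is $T$-sparse then $\vregime = \sparse$'' and prove its contrapositive by contradiction: if there were a time $t$ with $\vregime=\dense$ and $\moment{0}{\vt{t}}\le T$, take $t$ minimal and let $i_0$ be the most recent $\kapprox$-refresh before $t$; since the regime did not switch at $i_0$ we have $\kapprox(i_0) > 2T$, hence $\moment{0}{\vt{i_0}} > (4/5)\kapprox(i_0) > 8T/5$, and since $t - i_0 \le \lfloor T/10\rfloor$ the density can drop by at most $T/10$, giving $\moment{0}{\vt{t}} > 8T/5 - T/10 = 3T/2 > T$, a contradiction. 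This is exactly the ``pin down the exact constant'' step you deferred, and it is what licenses applying the bounded-change lemmas with $k = T$ (after substituting $\alpha/4$ for $\alpha$, which is where the $\vinterval$ formulas with the $16$ and $32p$ come from). A second, smaller omission: you should also handle the very first refresh inside a maximal dense stretch $[t,t']$ — the $\Mapprox$ value output at times in $[t, i_1)$ was computed at some $i_0 \le t$, and one must check $\vt{i_0}$ was also $T$-dense; the paper notes this boundary case explicitly. With those two points tightened, your proof matches the paper's.
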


\renewcommand{\arraystretch}{1.6}
\begin{table}[h!] 
    \centering
    \begin{tabular}{ | p{5em} | p{10em}| p{23.5em} | } 
    \hline
     \textbf{value of $p$} & \textbf{$\widetilde{O}(m^\mu n^\rho)$ space} & \textbf{detailed space complexity} \\ \hline
    $p\in[0,1]$ &  $\mu=\frac{1}{3}$, $\rho=0$ & 
    $O(m^{1/3}\cdot \alpha^{-5/3}\cdot \log^{5/3}(m/\alpha\delta))\cdot \polylog(n)$ \\ \hline
    \vspace{1pt} $p\in(1,2]$ &    
    \vspace{1pt} $\mu = \frac{p}{2p+1}$, $\rho = 0$ & 
    $O\left(m^{p/(2p+1)}\cdot \alpha^{-(5p+1)/(2p+1)}\cdot \log^{5p/(2p+1)}\left(m/(\alpha\delta)\right)\right)$ 
    $\cdot \polylog (n)$ \\ \hline
    $p=2$   &  $\mu=\frac{2}{5}$, $\rho=0$    &  $O\left(m^{2/5}\cdot \alpha^{-11/5} \log^{4/3}(m/\alpha\delta)\right)\cdot \polylog (n) $ \\ \hline
    \vspace{1pt}
    $p\in(2,\infty)$ & 
    \vspace{1pt}
    $\mu = \frac{p}{2p+1}$, $\rho = 1 - \frac{5}{2p+1}$ &
    $O\left((mp)^{p/(2p+1)} \cdot n^{1-5/(2p+1)}\cdot \alpha^{-(5p+1)/(2p+1)}\right)$
     $\cdot \log^{10p/(6p+3)}(m/(\alpha\delta))\cdot \polylog n$ \\ \hline
    \end{tabular}
    \caption{Space complexity of Algorithm~\ref{alg:main}. See Theorem~\ref{thm:main}.}\label{table:our_results}
\end{table}

\paragraph{Implementation notes for Algorithm~\ref{alg:main}.}\label{sec:imp_notes}
We start with a few implementation details. First, to ensure low memory usage,
one has to maintain a sparse 
representation of $v$, i.e., store only the non-zero coordinates in an easily 
searchable data structure such as a balanced binary search tree. This is 
possible, because as soon as $v$ becomes $4T$-dense, we stop maintaining 
it explicitly. Hence this part of the algorithm uses only $O(T)$ words of space.

We also avoid discussing numerical issues, and assume that for any integer 
$j \in [m]$, we can compute a good approximation to $j^p$ in $O(1)$ time, and 
also that summing such sufficiently good approximations still yields a 
sufficiently good approximation. In order to efficiently update $\Mexact$, while 
avoiding accumulating numerical errors (due to a sequence of additions and 
subtractions), one can create a balanced binary tree in which we sum 
approximations for $|v_i|^p$ for all non-zero coordinates $v_i$. Updating one of 
them then requires only updating the sums on the path to the root. This path is 
of length $O(\log T)$, and hence this requires updating at most 
$O(\log T)$ intermediate sums, each being a result of adding two values.

\paragraph{Proof of our main result.}
We are now ready to move on to the proof of our main result, which collects all the tools that we have developed throughout the paper.

\begin{proof}[Proof of Theorem~\ref{thm:main}]
We first prove our algorithm's correctness conditioning on three assumptions, and then we prove that these assumptions hold with high probability. Finally, we analyze the space complexity of the algorithm. Throughout the proof $\vt{i}$ denotes the value of $v$ after the $i\th$ update.
Our assumptions are:
\begin{enumerate}
    \item All invocations of $\Asparse$, $\Aapprox$, and $\Adensity$ are successful. That is, the following events occur: $\Asparse$ correctly recovers $v$ (provided that $v$ is $\lceil4T\rceil$-sparse), $\Aapprox$ returns a $(1\pm\alpha/4)$-approximation to the $p\th$-moment of $v$ whenever it is queried, and $\Adensity$ returns a $(1\pm.25)$-approximation to the number of non-zero coordinates in $v$ whenever it is queried.\label{assump:success}
    \item If $v$ is $T$-sparse, then $\vregime=\sparse$. \label{assump:ksparse}
    \item If $\vregime=\sparse$, then $v$ is $4T$-sparse. \label{assump:sparse}
\end{enumerate}

\textbf{Correctness under the assumptions.} By Assumption~\ref{assump:sparse}, $\Asparse$ is only invoked when $v$ is $4T$-sparse. By  the first item, each such invocation correctly recovers $v$. Hence, at the first time step of every time interval such that $\vregime=\sparse$, the algorithm has a sparse representation of $v$ (as discussed in Section~\ref{sec:imp_notes}) and this continues for the duration of the sparse interval. Therefore, for the duration of an interval where $\vregime=\sparse$, $\Mexact$ correctly approximates $\pmoment{v}$, and therefore all outputs of the algorithm are \aprxs to $\pmoment{v}$.

Consider now a time interval where $\vregime=\dense$. 
By the above discussion, at the first time step such that $\vregime=\dense$, it holds that $\pmoment{v}>4T$ (since during \sparse\ intervals the algorithm exactly knows  $\pmoment{v}$). 
We claim that at all time steps where $\vregime=\dense$, $\kapprox$ is a $(1\pm\alpha)$-approximation of $\pmoment{v}$. 
Fix a maximal time interval $[t,t']$  such that $\vregime=\dense$. 
Let $\Mapprox(t)$ denote the value of $\Mapprox$ at time step $t$, and let $i_0$ denote the time step in which this value was computed (note that $i_0\leq t$). 
Further let $i_1, \ldots, i_\ell$ denote all the time steps within $[t,t']$ in which $\Aapprox$ was invoked.
Now consider any two subsequent time steps $i_j, i_{j+1}$ for
$j\in[1,\ell-1]$, and any time step $z\in [i_{j}, i_{j+1}]$.
It holds that $i_{j+1}-i_j=\vinterval$. 
For $p\in[0,1],$ since $z-i_j\leq \vinterval=\alpha T/4$, it holds that $\|\vt{i_j}-\vt{z}\|_1\leq \alpha T/4$, and  by Lemma~\ref{lem:bc_p01} it follows that $\pmoment{\vt{z}}\in (1\pm\alpha/4)\pmoment{\vt{i_j}}$.
For $p\geq 1$,  $\|\vt{i_j}-\vt{z}\|_1\leq \vinterval=\frac{\alpha}{32p}(\frac{\alpha T}{16})^{1/p}$, and  Lemma~\ref{lem:bc_p1inf} implies that 
$\pmoment{\vt{z}}\in (1\pm\alpha/4)\pmoment{\vt{i_j}}$.
 By the assumption that all invocations of $\Aapprox$ are successful, $\Mapprox(i_j)\in(1\pm\alpha/4)\pmoment{v}$.
 Hence, it follows that for both possible regimes of $p$, $\Mapprox(z)\in (1\pm\alpha)\pmoment{\vt{z}}$  for any $z\in [i_1, t']$. 
Similar reasoning proves that at time step $i_0$, $v$ was $k$-dense, and hence for any $z\in [t, i_1]$, it holds holds that $\Mapprox(z)\in (1\pm\alpha)\pmoment{\vt{z}}$.
Therefore, for any $z\in [t,t']$ such that $[t,t']$ is a maximal time interval with $\vregime=\dense$, it holds that the output of the algorithm is a $(1\pm\alpha)$-approximation of $\pmoment{v}$.
Hence, it remains to prove that the assumptions hold with high probability.

\textbf{The assumptions hold.}
For item~\ref{assump:success},  by Theorem~\ref{thm:sparse}, with probability $1-O(n^{-3})$, $\Asparse$ is successful on \emph{all} invocations.\footnote{We note that with probability $1-O(n^{-3})$ (over its set of initial random coins) Algorithm $\Asparse$ correctly recovers \emph{all} $k$-sparse vectors, and hence its output is correct for any (adversarial) input stream.} 
Algorithm $\Adensity$ is queried $O(m/\lfloor T/10\rfloor)$ times, so by the setting of $q$ it holds that, with probability at least $1-\delta/2$, all queries return a $(1\pm.25)$-approximation of $\moment{0}{v}$.
Similarly, $\Aapprox$ is invoked  $O(m/\vinterval)$ times, and hence, with probability at least $1-\delta/2$, all queries return a $(1\pm\alpha/4)$-approximation of $\pmoment{v}$.
Hence, Assumption~\ref{assump:success} holds with probability at least $1-\delta-O(n^{-3})$. We henceforth condition on this event.

Now consider Assumption~\ref{assump:ksparse}, that if  $v$ is $T$-sparse then $\vregime=\sparse$. Clearly this holds from the beginning of the stream and until the first time that $\moment{0}{v}>4T$, since up to that point everything is deterministic and exact.
Assume towards contradiction that there exists  time steps  such that $v$ is $T$-sparse  and $\vregime=\dense$, and let $t$ be the earliest one.
Let $i_0$ be the closest step prior to $t$ in which $\kapprox$ was recomputed (by invoking $\Adensity$).  By the conditioning on Assumption~\ref{assump:success} holding, $\kapprox(i_0)\in 
\left[.75\moment{0}{\vt{i_0}},1.25\moment{0}{\vt{i_0}}\right]$. Since at time step $i_0$, the regime was not changed to \sparse, it also holds that 
$\kapprox(i_0)>2T$. Hence, $\moment{0}{\vt{i_0}}>(4/5)\cdot \kapprox(i_0)>(8/5)T$.  
Clearly, in $\lfloor T/10\rfloor$ updates $\moment{0}{v}$ cannot change by more than $T/10$. Hence, $\moment{0}{\vt{t}}>T$,  implying that $v$ is not $T$-sparse, and so we have reached a contradiction. 

We turn to Assumption~\ref{assump:sparse}. By the conditioning on Assumption~\ref{assump:success} holding, 
$\Asparse$ always correctly recovers $v$, implying that
as long as $\vregime=\sparse$, $v$ is exactly known to the algorithm. Therefore, it can be exactly detected when $\moment{0}{v}$ becomes greater than $4T$, at which point the regime is being set to \dense. Hence, up until that point, $\moment{0}{v}\leq 4T$ and $v$ is $4T$-sparse.

\textbf{Space complexity analysis.} We now analyze the space complexity of our algorithm.
By Theorem~\ref{thm:approx_alg} and
Lemma~\ref{lem:query_bounded}, for $q=m/\lfloor T/10 \rfloor$, Algorithm $\Adensity$ requires $O(\sqrt{m/T}\cdot \alpha^{-2}\cdot \log^{5/2}(m/(\alpha\delta))\cdot  \log n=\widetilde{O}(\sqrt{m/T})$ space. This complexity is always bounded by the following terms.
By Theorem~\ref{thm:sparse}, $\Asparse$ requires $O(k\polylog n)=O(T\polylog n)$ space. 
We continue to analyze the space complexity due to $\Aapprox$ separately for different regimes of $p$. 

For $p\in[0,1]$, by Theorem~\ref{thm:approx_alg} and Lemma~\ref{lem:query_bounded}, Algorithm $\Aapprox$ with $q=m/\vinterval=O(m/(\alpha T))$, requires $O(\sqrt{m/(\alpha T)}\cdot \log^{5/2}(m/(\alpha\delta)) \cdot \alpha^{-2}\cdot \log n =\widetilde{O}(\sqrt{m/T})$ space. 
Hence, setting $T$ to balance between the space complexities of $\Asparse$ and $\Aapprox$ (and $\Adensity$), 
we get that
the space complexity of $(\alpha,\delta)$-approximating the $p\th$ moment for $p\in[0,1]$ is
\[O\left(m^{1/3}\cdot \alpha^{-5/3}\cdot \log^{5/3}(m/(\alpha\delta) \right)\cdot \polylog(n)= 
\widetilde{O}(m^{1/3}).\]
For $p\in(1,\infty]$, it holds that  $\vinterval=\Theta((\alpha/p)\cdot (\alpha T)^{1/p})$. We again consider two separate regimes. First, $p\in(1,2]$.
By Lemma~\ref{lem:query_bounded} and Theorem~\ref{thm:approx_alg}, since $q=\sqrt{m/\vinterval}$,
$\Aapprox$ takes $O\left(\sqrt{mp/(\alpha(\alpha T)^{1/p})} \cdot \log^{3/2}(m/(\alpha \delta))\cdot  \alpha^{-2}\cdot \log m\cdot\log(1/\delta)\right) =\widetilde{O}(\sqrt{m/T^{1/p}})$ space.
Equating this term with the $O(T\cdot \polylog n)$ space required by $\Asparse$, 
results in a space complexity of 
\[
O\left(m^{p/(2p+1)}\cdot \alpha^{-(5p+1)/(2p+1)}\cdot \log^{5p/(2p+1)}\left(m/(\alpha\delta)\right)\right)\cdot \polylog (n)= \widetilde{O}(m^{p/(2p+1)})
\]
for $p\in(1,2]$.
Finally, we consider the regime $p\in(2,\infty).$
In this regime, we again have $q=(m/\vinterval)=O(mp/(\alpha(\alpha T)^{1/p}))$, and so by Lemma~\ref{lem:query_bounded} and Theorem~\ref{thm:approx_alg},  the space usage of $\Aapprox$ is
$O(\sqrt{mp/(\alpha(\alpha T)^{1/p})}\cdot \log^{3/2}(m/(\alpha\delta)) (\alpha^{-2}\cdot \log(1/\delta)\cdot n^{1-2/p}))=\widetilde{O}(\sqrt{m/T^{1/p}}\cdot n^{1-2/p})$.
Hence, equating this with the $O(T\polylog n)$ space required by $\Asparse$, we get 
 a space complexity of
\begin{align*}
&O\left((mp)^{p/(2p+1)} \cdot n^{1-5/(2p+1)}
    \alpha^{-(5p+1)/(2p+1)}\cdot \log^{5p/(2p+1)}\left(m/(\alpha\delta)\right)\right)\cdot \polylog n  \\
    & = \widetilde{O}((mp)^{p/(2p+1)} \cdot n^{1-5/(2p+1)}).
\end{align*}
 This concludes the proof.
\end{proof}

\section*{Acknowledgments}

This work was inspired by the conversation of Cameron Musco and David Woodruff---after David Woodruff's talk at the STOC 2021 workshop on adversarially robust streaming \cite{workshop,workshop_recording}---about when tracking moments in the general 
turnstile model is difficult and involves a large flip number. The authors wish to thank Rajesh Jayaram and Uri Stemmer for useful discussions.

\bibliographystyle{alpha}
\bibliography{references}

\end{document}